\documentclass[12pt]{article}

\usepackage[utf8]{inputenc}
\usepackage[T1]{fontenc}
\usepackage{microtype}
\usepackage[table]{xcolor}
\usepackage{colortbl}

\usepackage{amsmath,amssymb,amsthm}
\usepackage{bm}

\usepackage[margin=1in]{geometry}
\usepackage{setspace}
\usepackage{natbib}

\usepackage[hidelinks]{hyperref}
\hypersetup{
    colorlinks  = true,
    citecolor   = black,
    linkcolor   = black,
    urlcolor    = black
}

\usepackage{booktabs}
\usepackage{threeparttable}
\usepackage{tabularx}
\usepackage{siunitx}

\usepackage{graphicx}
\usepackage{float}

\usepackage{tikz}
\usepackage{pgfplots}
\usepackage{pgfplotstable}
\pgfplotsset{compat=1.18}
\usetikzlibrary{arrows.meta,positioning,calc,patterns,decorations.pathreplacing}

\usepackage[dvipsnames]{xcolor}

\usepackage{enumitem}
\setlist{noitemsep}

\usepackage{ifthen}

\usepackage{titlesec}
\titleformat{\section}
    {\normalfont\large\bfseries}{\thesection}{1em}{}
\titleformat{\subsection}
    {\normalfont\normalsize\bfseries}{\thesubsection}{1em}{}
\titleformat{\subsubsection}
    {\normalfont\normalsize\itshape}{\thesubsubsection}{1em}{}

\usepackage{fancyhdr}
\usepackage{abstract}

\newtheorem{proposition}{Proposition}

\newtheorem{assumption}{Assumption}
\theoremstyle{remark}
\newtheorem{remark}{Remark}

\newcommand{\sym}[1]{\ifmmode^{#1}\else\(^{#1}\)\fi}
\newcommand{\bfbeta}{\boldsymbol{\beta}}
\newcommand{\bfT}{\mathbf{T}}
\newcommand{\bfRF}{\mathbf{RF}}

\title{The Cascade Identity: 2SLS as a Policy Parameter in Capacity-Constrained Settings}

\author{
Niklas Bengtsson\thanks{Department of Economics, Uppsala University. We thank Magne Mogstad, Alex Torgovitsky, Peter Fredriksson, Adam Altmejd, and Edwin Leuven for comments, and participants at the UPFC Workshop on Economic Inequality and Human Capital. Financial support from the Jan Wallander and Tom Hedelius Foundation (P23-0225) is gratefully acknowledged.}
\and 
Per Engström\footnotemark[1]
}

\date{April 10, 2026\\ \medskip \small}

\begin{document}

\maketitle

\begin{abstract}
\vspace{-3em}
\centering
\noindent\begin{minipage}{0.9\textwidth}
\singlespacing
Governments routinely adjust capacity in rationed programs such as university fields, medical training and public housing, where admitting one individual displaces others and triggers chains of reallocation. We show that in such settings, the standard multi-treatment two-stage least squares (2SLS) coefficient identifies exactly the total societal effect of a marginal expansion, including all downstream reallocations. The result is an algebraic identity: under instrument relevance and a single alignment condition, satisfied in centralized admissions systems, the 2SLS coefficient equals the general-equilibrium shadow value of relaxing a capacity constraint, while the single-instrument Wald ratio captures only the direct effect. Their difference recovers the full equilibrium adjustment without additional structure. Monotonicity is not required. The identity extends beyond queue-based allocation to any fixed-supply setting, including competitive markets with price instruments. We apply the framework to two policy questions in Swedish university admissions, where marginal students are allocated across fields through a centralized lottery mechanism. First, revisiting the debate on whether economics and business education erodes prosocial values, we find that the direct effect of expanding business on charitable giving is precisely zero, but expanding the less competitive fields that business students are displaced from has large prosocial effects. Second, analyzing gender-targeted STEM policies, we find that admitting four women to competitive STEM generates one additional male STEM degree through downstream vacancies. Both are general-equilibrium effects invisible to single-instrument methods.
\end{minipage}
\end{abstract}

\medskip
\noindent\textbf{JEL codes:} C36, D61, I23, I26, I28, D64 \\
\textbf{Keywords:} Instrumental variables, multiple treatments, heterogeneous treatment effects, capacity constraints, cascade effects, university admissions, charitable giving, gender quotas

\section{Introduction}

Governments routinely adjust capacity in programs such as university
fields of study, medical training, and public services. In such
settings, admitting one additional individual typically displaces
someone from another program, whose vacated slot is then filled from
its own queue, triggering a chain of reallocations across the system.
The policy-relevant effect of such an expansion is therefore not just
the effect on the marginal entrant, but the total effect of this
reallocation chain.

We show that a standard econometric object---the multi-treatment
two-stage least squares (2SLS) coefficient---identifies exactly this
total effect. The result is an algebraic identity: under instrument
relevance and the condition that the instrument and the policy
operate on the same allocation margin, the 2SLS coefficient equals
the general-equilibrium shadow value of relaxing a capacity
constraint, including all downstream reallocations through the
system. No monotonicity condition is required, no
restriction on the pattern of cross-program substitution is imposed,
and the identity extends beyond queue-based allocation to any
fixed-supply setting, including competitive markets with price
instruments.

The margin-alignment condition holds by construction in centralized
admissions systems, where capacity expansions shift the admission
cutoff and the instrumental variable strategy exploits local
randomness around that same threshold (e.g.,
\citealp{ockert2010, kirkeboen2016, abdulkadiroglu2017, altmejd2021,
bleemer2022}), and more generally in any rationed setting where slots are
filled from queues, including public housing lotteries
\citep{jacob2012}, military draft lotteries \citep{angrist1990},
and oversubscribed public programs
\citep{kline2016, gelber2016effects}.

The mechanism is straightforward. Expanding a program draws
individuals from other programs, which in turn refill their vacancies
from their own queues, propagating the shock through the system. The
first-stage matrix already encodes these reallocations---how shocks
to one treatment shift enrollment across all others---and 2SLS
aggregates them into their total outcome effect by inverting this
matrix. The off-diagonal elements, often viewed as a complication for \emph{individual}
causal interpretation, are precisely the vacancy-creation rates
through which the policy operates.

The result provides a different way of interpreting multi-treatment IV. With multiple treatments and heterogeneous effects, 2SLS coefficients are typically viewed as weighted averages of individual treatment effects, where the weights may be negative and depend on complex substitution patterns, making the estimand difficult to interpret \citep{mogstad2024handbook}. The standard response has been to impose additional structure on choice behavior, such as latent index models \citep{heckman2008multiple}, observed next-best alternatives \citep{kirkeboen2016}, additive separability \citep{heckman2018}, or marginal treatment response approaches \citep{mogstad2024policy}. More recently, \citet{bhuller2024} show that positive weighting requires ruling out cross-treatment spillovers.

We instead study environments in which such spillovers are intrinsic: capacity-constrained systems in which individuals are allocated across competing programs. In such settings,
the relevant object for welfare analysis is the total societal effect
of a marginal capacity expansion, and we show that this is what the
2SLS coefficient identifies. It is not an average of individual
treatment effects but a general-equilibrium policy parameter: the
marginal societal value of expanding a treatment in a
capacity-constrained system. Under homogeneous effects, the downstream cascade
is inert and this parameter reduces to the standard individual
treatment effect; under heterogeneity, they diverge.

\citet{haavelmo1943} showed that recovering the effect of a policy intervention requires inverting the full simultaneous system, not reading off the reduced form. The modern
literature has developed this theme in capacity-constrained settings
specifically. 
\citet{kline2016} show that in Head Start, the policy-relevant
effect of expanding a program includes the downstream reallocations
it triggers through competing alternatives: an equilibrium object
that the standard Wald ratio misses and that recovering requires a
structural model of the allocation mechanism.
\citet{gandil2025trickle} and \citet{arkhangelsky2025} develop
methods to recover such equilibrium effects; the former by
simulating counterfactual equilibria through a re-engineered version
of the Danish admissions mechanism, the latter by constructing an
adjusted outcome that appends each agent's equilibrium externality,
estimated from LATEs at admission cutoffs. Both approaches require
the researcher to build new objects beyond standard regression
output.

Our result complements these contributions by showing that in any
fixed-supply system, the standard multi-treatment 2SLS coefficient
\emph{already is} the equilibrium-adjusted estimand; no
transformation of the outcome, no simulation of the mechanism, and no
assumptions on individual choice behavior are needed. The result
connects to \citet{carneiro2010}, who show that marginal policy
effects can be identified under weaker conditions than average
effects, without extrapolating beyond the support of the instrument.
The cascade identity is the multi-treatment analogue of this result in capacity-constrained systems, where the relevant margin is defined by the admission cutoff and the policy effect propagates through competing programs rather than operating on a single treatment. A
researcher needs only the 2SLS coefficient $\beta_k$ and the
marginal cost of capacity to conduct cost-benefit analysis. The Wald
ratio $W_k$ provides the partial-equilibrium benchmark; the cascade
correction $\beta_k - W_k$ is the general-equilibrium adjustment.
Both ingredients are standard regression outputs.

We illustrate the framework with two applications using Swedish
university admissions, where marginal applicants are allocated across fields
through a centralized lottery mechanism. The first revisits the
long-standing question of whether economics and business education
erodes prosocial values. We find that the direct effect of expanding
business on charitable giving is precisely zero: the marginal
business student gives no more or less to charity than she would have
in her next-best alternative. But the total policy effect is positive
and significant, driven entirely by the cascade: opening a business
seat draws a student away from a field that would have increased her giving, and refilling that vacancy draws
someone in from outside higher education. The apparent prosociality
gap between business students and others is not caused by business
education; it reflects the prosocial effects of the fields that
business students would otherwise have attended.

The second application analyzes gender-targeted STEM policies.
Quotas and targeted expansions are among the most common
instruments for increasing female representation in competitive
fields, yet their system-level consequences are difficult to
evaluate: admitting a woman to a competitive program displaces
someone from a fallback program, and the vacancy left behind is
filled from a mixed-gender queue. The cascade framework is informative about exactly this type of intervention, since it traces the full
chain of reallocations triggered when the composition of the
marginal entrant changes. We find that admitting one additional
woman to competitive STEM generates 0.25 STEM degrees, of which
roughly one-fifth accrues to men: the woman vacates a mid-tier STEM
slot that is refilled from a predominantly male queue. A
replacement policy---admitting a woman in place of the marginal
man---redistributes STEM degrees across genders but barely changes
the total. Both effects are invisible to single-instrument methods
and emerge only through the cascade.

The remainder of the paper proceeds as follows.
Section~\ref{sec:model} presents the policy parameter, econometric
specification, and assumptions, and derives the cascade identity.
Section~\ref{sec:discussion} discusses the assumptions and their
scope. Section~\ref{sec:heterogeneity} examines heterogeneity and
replacement policies. The following sections apply the framework to
field-of-study choice and prosocial behavior
(Section~\ref{sec:application1}) and to gender-targeted STEM policies
(Section~\ref{sec:stem_application}). Section~\ref{sec:conclusion}
concludes.

\section{The Cascade Identity}
\label{sec:model}

\subsection{The policy parameter}
\label{sec:policy_parameter}

There are $K$ programs, indexed by $j \in \{1,\dots,K\}$, and an
outside option $j = 0$. Each program has fixed and binding capacity. Individual~$i$
assigned to alternative~$j$ generates a specific societal value.
A planner considering whether to expand program~$k$ by one slot needs
the total expected societal effect of this expansion, including all downstream
reallocations. Denote this object~$T_k$.

The marginal entrant to program~$k$ generates a direct benefit~$W_k$ in expectation.
But she may have come from another program~$j$, creating a vacancy
there. When program~$j$ refills that vacancy from its ranking list,
the new entrant to~$j$ may herself have come from program~$m$,
creating a further vacancy. The cascade continues until a slot is
filled from the outside option.

Let $r_{kj}$ denote the \emph{vacancy creation rate}: the number of
vacancies created in program~$j$ per new admission to program~$k$.
Then $T_k$ satisfies the recursion
\begin{equation}
T_k = W_k + \sum_{j \neq k} r_{kj}\, T_j, \qquad k = 1,\dots,K.
\label{eq:cascade_recursion}
\end{equation}
Each reallocation in program~$j$ carries societal value~$T_j$---the
same object we are defining---so the equation is a fixed-point
condition: the value of expanding program~$k$ equals the direct
effect on the marginal entrant plus the value of all induced
reallocations, each of which is itself an equilibrium object defined
by the same primitives.

Equation~\eqref{eq:cascade_recursion} is not an identifying
assumption. It is a consequence of three primitives: (i)~the
definition of~$T_k$ as the total societal effect of a one-slot
expansion, (ii)~fixed total supply of every program other than~$k$,
and (iii)~the allocation mechanism fills vacancies from ranked queues.

It is useful to express (\ref{eq:cascade_recursion}) in matrix form. Collect the vacancy creation rates into a $K \times K$ matrix~$M$
with entries $m_{kj} = r_{kj}$ and zeros on the diagonal. Let
$\mathbf{W} = (W_1, \dots, W_K)^T$ denote the vector of direct
effects. The cascade recursion becomes
\begin{equation}
(I - M)\,\bfT = \mathbf{W},
\label{eq:cascade_fixedpoint}
\end{equation}
with solution $\bfT = (I - M)^{-1}\,\mathbf{W}$ whenever the
spectral radius of~$M$ is less than one.

The Neumann series expansion
\begin{equation}
(I - M)^{-1} = \sum_{n=0}^{\infty} M^n
= I + M + M^2 + M^3 + \cdots
\label{eq:neumann_main}
\end{equation}
has a direct economic interpretation. Each power of~$M$ corresponds
to one round of the cascade:
\begin{itemize}[nosep]
\item $M^0 = I$: the direct effect on the marginal entrant (the
      vector~$\mathbf{W}$).
\item $M^1$: first-round vacancy refills in all other programs.
\item $M^2$: feedback from those refills back through the system.
\item $M^n$: the $n$th round of adjustment.
\end{itemize}
This is Walrasian t\^atonnement: a supply shock propagates through
successive rounds of reallocation until all vacancies are filled.
Convergence of the series, given spectral radius of~$M$ less than
one, is stability of the equilibrium.
Equation~\eqref{eq:cascade_fixedpoint} computes the infinite
t\^atonnement in closed form. Appendix~\ref{app:neumann} works through an instructive two-program case in detail.

The policy parameter~$T_k$ is therefore the general-equilibrium
shadow value of relaxing the capacity constraint in program~$k$.
The direct effect~$W_k$ is the partial-equilibrium shadow value;
the cascade correction $T_k - W_k$ is the general-equilibrium
adjustment. The question is whether standard econometric tools can
recover~$T_k$.

\begin{remark}
    (Reallocation) When the planner cannot expand total capacity but can reallocate a
slot from program~$m$ to program~$k$, the net societal effect is
simply $T_k - T_m$. The contraction of~$m$ by one slot triggers a
vacancy cascade governed by the same matrix~$M$, but in reverse:
individuals spill out rather than in. Since $(I-M)\mathbf{T} =
\mathbf{W}$ continues to hold, the welfare gain of redistribution is
$T_k - T_m$. No new objects need to be estimated; the same
general-equilibrium shadow values~$\mathbf{T}$ that govern expansion
also govern redistribution.
\end{remark}

\subsection{Econometric specification}
\label{sec:econometric}

Consider the linear model
\begin{equation}
Y_i = \beta_0 + \sum_{j=1}^{K} \beta_j A_{ij} + \varepsilon_i,
\label{eq:secondstage}
\end{equation}
estimated by two-stage least squares using instruments
$(Z_{i1},\dots,Z_{iK})$. $A_{ij}$ indicates admission of individual $i$ to programme $j$. Predetermined covariates and fixed effects are suppressed;
all results extend immediately.

Define the \emph{first-stage matrix} $\Pi$ with entries
\begin{equation}
\pi_{jk} \;=\; \frac{\partial\, E[A_{ij}]}{\partial Z_{ik}},
\end{equation}
so that $\pi_{jk}$ measures how instrument~$k$ shifts enrollment in
program~$j$.

The reduced form satisfies
\begin{equation}
\mathrm{RF}_k \;=\; \frac{\partial\, E[Y_i]}{\partial Z_{ik}}
\;=\; \sum_{j=1}^{K}\beta_j\,\pi_{jk},
\end{equation}
or in matrix form
\begin{equation}
\mathbf{RF} \;=\; \Pi^T\boldsymbol{\beta}.
\label{eq:rf_matrix}
\end{equation}

The 2SLS estimand is $\boldsymbol{\beta} = (\Pi^T)^{-1}\mathbf{RF}$:
the reduced form, passed through the inverse of the full first-stage
matrix.

\subsection{Assumptions}
\label{sec:assumptions}

We require two conditions linking the econometric objects to the
planner's problem.

\begin{assumption}[Instrument relevance]
\label{ass:relevance}
$\pi_{kk}\neq 0$ for all $k=1,\dots,K$.
\end{assumption}

\begin{assumption}[Ranking-list refill]
\label{ass:margin}
Any change in program~$j$'s enrollment is mediated by admitting or not
admitting the marginal applicant on program~$j$'s ranking list. The
instrument and the policy of expanding capacity by one slot operate on
this same margin.
\end{assumption}

Assumption~\ref{ass:margin} is an institutional condition satisfied by
construction in centralized admission systems (Sweden, Norway, Denmark,
Chile), where both the instrument and capacity expansions shift the
cutoff along the same ranked queue. It applies equally to school choice
mechanisms with waitlists and other settings where programs fill to
capacity from ranked lists. In less centralized settings, the condition
need not hold by construction but remains a plausible interpretation of
the instrument whenever the researcher treats the IV estimate as
informative about capacity policy.

Assumption~\ref{ass:margin} is thus not a new requirement imposed by the
cascade framework. It is the same condition that makes any LATE estimate
in these settings interpretable as a policy-relevant parameter. When
\citet{kirkeboen2016} argue that their estimates are ``informative about
policy that (marginally) changes the supply of slots in different
fields,'' they are asserting precisely that the instrument and the policy
operate on the same margin, i.e. that the person who would be admitted under
a marginal expansion is the same type of person whose admission is
shifted by the instrument. Without this alignment, the LATE identifies
the effect on lottery compliers but says nothing about the effect of
expanding capacity, because the marginal expansion might admit a
different person through a different channel. Any researcher who
interprets a lottery- or cutoff-based IV estimate as informative about
capacity policy is implicitly invoking Assumption~\ref{ass:margin}. We
merely formalize what the literature already assumes.

A prominent view in applied microeconomics holds that the reduced
form is more
policy-relevant than the IV coefficient, because it captures the effect
of the instrument itself and requires fewer assumptions
\citep{angristpischke2009, imbens2014}. Assumption~\ref{ass:margin} accepts the
premise of this argument, i.e. that the instrument is a relevant
policy, but overturns the conclusion. When the policy is a capacity
expansion that propagates through the system, the reduced form captures
only the effect of one lottery draw on the lottery winner: a
partial-equilibrium object that ignores all downstream reallocations.
It is the 2SLS coefficient---obtained by inverting the full first-stage
matrix---that traces the cascade and recovers the general-equilibrium
policy effect. The distinction between the reduced form as a prediction tool and the structural equation as a policy tool dates to \citet{haavelmo1943}: when the government intervenes by changing a constraint, the structural parameters govern the response---not the conditional expectations from single equations. 

The key implication of Assumption~\ref{ass:margin} is that the
instrument-based first stage identifies the vacancy creation rates
and direct effects from the planner's problem:
\begin{equation}
r_{kj} = \frac{-\pi_{jk}}{\pi_{kk}}, \qquad
W_k = \frac{\mathrm{RF}_k}{\pi_{kk}}.
\label{eq:bridge}
\end{equation}
The logic is simple. Per unit increase in~$Z_k$, program~$k$ gains
$\pi_{kk}$ students and program~$j$ loses $|\pi_{jk}|$ students.
Rescaling to one new admission to~$k$, the number of vacancies
created in~$j$ is $-\pi_{jk}/\pi_{kk}$. Similarly, the reduced
form $\mathrm{RF}_k$ is the total outcome change per unit increase
in~$Z_k$, so the outcome change per new admission---the direct
effect---is the Wald ratio $\mathrm{RF}_k / \pi_{kk}$. These are the vacancy
creation rates and direct effects that enter the cascade recursion
of Section~\ref{sec:policy_parameter}; Assumption~\ref{ass:margin}
ensures they are identified by the instrument variation.
The same marginal responses that govern the effect of the instruments
also govern the effect of a marginal capacity expansion. This bridges
the policy parameter defined in Section~\ref{sec:policy_parameter}
and the econometric objects defined in Section~\ref{sec:econometric}.

Assumption~\ref{ass:margin} does impose a stability requirement: the
ranking list that governs who fills a vacancy is not itself affected by
the expansion. For a marginal change, this is innocuous -- no applicant or
program responds to a single-slot shift. For larger reforms, however, it
rules out endogenous responses on both sides of the market: applicants
must not adjust their application or ranking decisions, and programs must
not alter admission criteria or capacity in other dimensions.
\citet{gandil2025trickle} illustrates the importance of this condition:
when applicants are allowed to adjust their application lists, simulated
ripple effects increase substantially. The cascade identity therefore
characterizes the marginal expansion at current capacity levels; applying
it to non-marginal reforms requires the first-stage matrix to remain
stable, a substantively stronger assumption.

The framework imposes little additional structure beyond these two
assumptions. Treatment indicators need not be mutually exclusive: a
student can be ``ever admitted'' to multiple programs, and the cascade
operates on net enrollment flows. No monotonicity condition is required.
Cross-effects are allowed: $\pi_{jk} \neq 0$ for $j \neq k$, as they
generically are in capacity-constrained systems. Flows need not balance:
$\sum_j \pi_{jk} \neq 0$, so instruments may draw individuals both from
other programs and from the outside option. Indeed, at least some instruments \emph{must} draw entrants from the outside
option: if all column sums $\sum_j \pi_{jk}$ were zero, the
first-stage matrix would be singular and the 2SLS coefficient would
not exist.

\subsection{Main result}
\label{sec:mainresult}

\begin{proposition}[Cascade Identity]
\label{prop:main}
Under Assumptions~\ref{ass:relevance} and~\ref{ass:margin}, the 2SLS
coefficients equal the societal policy effects:
\begin{equation}
\boxed{\;\mathbf{T}=\boldsymbol{\beta}\;}
\end{equation}
That is, $\beta_k$ equals the total societal effect of expanding
program~$k$ by one slot, including all cascading reallocations through
the system.
\end{proposition}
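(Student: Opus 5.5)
The strategy is to show that the 2SLS vector $\boldsymbol{\beta}$ solves the cascade fixed-point system \eqref{eq:cascade_fixedpoint}, with $M$ and $\mathbf{W}$ identified through the bridge equations \eqref{eq:bridge}. Uniqueness of that solution then gives $\mathbf{T}=\boldsymbol{\beta}$. No Neumann series or spectral-radius argument is needed for the identity itself. Those only matter for interpreting $\mathbf{T}$ as the limit of the t\^atonnement.

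First, invoke Assumption~\ref{ass:margin}, which the paper has already translated into \eqref{eq:bridge}: $r_{kj}=-\pi_{jk}/\pi_{kk}$ for $j\neq k$ and $W_k=\mathrm{RF}_k/\pi_{kk}$. These are well defined by Assumption~\ref{ass:relevance}. In matrix form, let $D=\mathrm{diag}(\pi_{11},\dots,\pi_{KK})$, which is invertible. Then the identified cascade matrix is
\[
M = I - D^{-1}\Pi^T,
\]
whose diagonal is indeed zero and whose $(k,j)$ off-diagonal entry is $-\pi_{jk}/\pi_{kk}$. Similarly, $\mathbf{W}=D^{-1}\mathbf{RF}$.

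Second, substitute into the recursion \eqref{eq:cascade_fixedpoint}. Since $I-M=D^{-1}\Pi^T$, the system $(I-M)\mathbf{T}=\mathbf{W}$ becomes $D^{-1}\Pi^T\mathbf{T}=D^{-1}\mathbf{RF}$. Multiplying by $D$ gives
\[
\Pi^T\mathbf{T}=\mathbf{RF},
\]
which is exactly the reduced-form relation \eqref{eq:rf_matrix} satisfied by $\boldsymbol{\beta}$. Row by row this is trivial: the $k$th equation $T_k=\mathrm{RF}_k/\pi_{kk}-\sum_{j\neq k}(\pi_{jk}/\pi_{kk})T_j$ rearranges to $\sum_j\pi_{jk}T_j=\mathrm{RF}_k$.

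Third, conclude by uniqueness. The 2SLS estimand exists only if $\Pi$ is invertible, the standard rank condition implicit in defining $\boldsymbol{\beta}=(\Pi^T)^{-1}\mathbf{RF}$. In that case $I-M=D^{-1}\Pi^T$ is invertible too, so the cascade system has the unique solution $\mathbf{T}=(\Pi^T)^{-1}\mathbf{RF}=\boldsymbol{\beta}$. I would add that when the spectral radius of $M$ is below one, this coincides with the Neumann-series value \eqref{eq:neumann_main}, so the identity also matches the round-by-round interpretation.

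The main obstacle is not the algebra but the legitimacy of the substitution in the first step. The argument needs the $T_j$ appearing on the right of \eqref{eq:cascade_recursion} to be the same objects as the policy parameters, and it needs the vacancy rates for a capacity expansion to equal the instrument-induced ratios. That is precisely the content of Assumption~\ref{ass:margin}, so I would state carefully that the policy-induced flows are the instrument-induced flows rescaled per admission. I would also flag that invertibility of $\Pi$, beyond the diagonal relevance of Assumption~\ref{ass:relevance}, is needed for uniqueness. Without it, $\boldsymbol{\beta}$ is not defined in the first place.
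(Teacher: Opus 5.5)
Your proof is correct and follows the paper's own argument. You substitute the bridge equations \eqref{eq:bridge} into the recursion, rearrange to $\Pi^T\mathbf{T}=\mathbf{RF}$, and compare with $\mathbf{RF}=\Pi^T\boldsymbol{\beta}$; your factorization $I-M=D^{-1}\Pi^T$ is just the stacked form of the paper's row-by-row multiplication by $\pi_{kk}$. Your explicit appeal to invertibility of $\Pi$ in the last step is the right justification. The paper's final line instead cites $\Pi$ being ``diagonal nonzero'' under Assumption~\ref{ass:relevance}, which alone does not guarantee invertibility; what is needed is the rank condition already implicit in $\boldsymbol{\beta}=(\Pi^T)^{-1}\mathbf{RF}$.
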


\begin{proof}
By Assumption~\ref{ass:margin}, the first stage identifies the
vacancy creation rates and direct effects:
\begin{equation}
r_{kj} = \frac{-\pi_{jk}}{\pi_{kk}}, 
\qquad
W_k = \frac{\mathrm{RF}_k}{\pi_{kk}}.
\end{equation}

Substituting into the cascade recursion
\eqref{eq:cascade_recursion} gives, for each $k$,
\begin{equation}
\label{eq:cascade}
T_k 
= \frac{\mathrm{RF}_k}{\pi_{kk}}
- \sum_{j \neq k} \frac{\pi_{jk}}{\pi_{kk}}\, T_j.
\end{equation}

Multiplying both sides by $\pi_{kk}$ and rearranging,
\begin{equation}
\mathrm{RF}_k
= \sum_{j=1}^K \pi_{jk}\, T_j.
\end{equation}

Stacking these equations across $k$ yields the matrix system
\begin{equation}
\mathbf{RF} = \Pi^T \mathbf{T}.
\end{equation}

But by the reduced-form identity \eqref{eq:rf_matrix},
\begin{equation}
\mathbf{RF} = \Pi^T \boldsymbol{\beta}.
\end{equation}

Since $\Pi$ is diagonal nonzero (Assumption~\ref{ass:relevance}),
it follows immediately that
\begin{equation}
\mathbf{T} = \boldsymbol{\beta}.
\end{equation}
\end{proof}

The cascade recursion~\eqref{eq:cascade_recursion} is derived from the
primitives fixed supply and queue-based reallocation. It is not assumed to
match the econometric objects. To verify that the identity is not an
artifact of the algebraic setup, Appendix~\ref{app:simulation}
computes~$T_k$ independently by brute force: expanding a program by one
slot in a fully simulated multi-round admissions system with retakes,
programme switching, new cohorts, noncompliance, and defiers. It then sums up the resulting outcome changes across all individuals. The 2SLS
coefficient $\hat{\beta}_k$, estimated on the same simulated data, agrees
with this experimentally computed~$T_k$ to within sampling error (paired
$t$-statistics of 1.35 and 0.60 across 1{,}000 replications).

\begin{remark}[Generality beyond queues]
\label{rem:generality}
The proof uses only two ingredients: the reduced-form identity
$\mathbf{RF} = \Pi^T\boldsymbol{\beta}$ and the cascade
equation~\eqref{eq:cascade}. The first is definitional. The second
follows from any allocation mechanism in which (i)~total supply of each
good is fixed and (ii)~the instrument is the variable through which a
marginal supply expansion is transmitted to individual allocations.
Ranked queues are one such mechanism; competitive markets with price
instruments are another. Appendix~\ref{app:general} proves the cascade
identity for an arbitrary allocation mechanism, showing that the
fixed-supply condition, and not the institutional form, is the essential
assumption.
\end{remark}

\begin{remark}[Potential outcomes]
The cascade identity can also be derived in the potential outcomes framework. The notation is more cumbersome without matrices, but Appendix~\ref{sec:threeprog} works through a two-program (plus outside option) example with ordered treatments, connecting each term in the 2SLS coefficient to a specific complier transition.
\end{remark}

\subsection{Homogeneous treatment effects}
\label{sec:homogeneous}

Under homogeneous treatment effects, every individual gains
$\Delta_j = Y(j) - Y(0)$ from enrolling in program~$j$, regardless of
type. Since the average treatment effect and the marginal policy
effect coincide when effects are constant, the homogeneous case
provides a useful benchmark for interpreting the cascade identity.
 
Consider expanding program~$k$ by one slot. The new entrant may have
come from another program~$j$. Her net gain is
$\Delta_k - \Delta_j$: she gains $\Delta_k$ from entering~$k$ but
loses $\Delta_j$ from leaving~$j$. Her vacated seat in~$j$ is filled
by the next person on~$j$'s ranking list, who may in turn have come
from program~$m$, gaining $\Delta_j - \Delta_m$. This person's
departure from~$m$ is again filled from~$m$'s ranking list, and so
on. The cascade terminates when a seat is filled by someone from the
outside option, who gains $\Delta_\ell$ for whatever program~$\ell$
she enters.
 
Under homogeneity, the intermediate terms telescope. Summing along the
chain:
\[
(\Delta_k - \Delta_j) + (\Delta_j - \Delta_m) + \cdots + \Delta_\ell
= \Delta_k.
\]
Every intermediate program cancels: it appears once as a gain (for the
person entering) and once as a loss (for the person leaving). The only
term that survives is $\Delta_k$, the effect at the origin of the
chain. The cascade is a zero-sum reallocation at every intermediate
step, and the sole net effect is that one additional person, wherever
she sits at the end of the chain, has been drawn into the system
through program~$k$'s expansion.\footnote{When treatment indicators
are not mutually exclusive, the terminal entrant in the cascade need
not come from the outside option; she may already be enrolled in
other programs. However, the second-stage equation is linear and
additive in the treatment indicators, so under homogeneous effects the
incremental gain from adding program~$k$ is $\Delta_k$ regardless of
which other programs the individual is already enrolled in. The
telescoping argument goes through unchanged.} Hence
$\beta_k = \Delta_k = Y(k) - Y(0)$, the standard textbook
interpretation.
 
This makes precise what heterogeneous treatment effects add. When
effects differ across individuals, the person who enters program~$j$
at one link of the cascade may gain more or less than the person who
left~$j$ at the previous link. The intermediate terms no longer
cancel. The 2SLS coefficient $\beta_k$ captures the full chain of
non-cancelling gains and losses. The cascade identity
$\bfT = \bfbeta$ says that 2SLS performs this accounting
automatically: it aggregates the direct effect and all downstream
spillovers into a single policy-relevant number.
 
\begin{remark}[What heterogeneity does]
Under homogeneous effects, the cascade is inert---every reallocation
is zero-sum---and $\beta_k$ reduces to the individual-level causal
effect. Under heterogeneous effects, the cascade is active: each link
contributes a non-zero net effect because the person entering a
program differs from the person leaving. The magnitude of the cascade
correction depends on (i)~the extent of cross-program reallocation
(the off-diagonal elements of $\Pi$) and (ii)~the degree of treatment
effect heterogeneity across complier types at each margin. When either
is small, $\beta_k \approx \Delta_k$; when both are large, the
cascade can substantially alter the policy-relevant effect.
\end{remark}

\section{Discussion}
\label{sec:discussion}

\subsection{The cascade as a general-equilibrium object}
\label{sec:PEtoGE}

The distinction between the Wald ratio $W_k$ and the 2SLS coefficient
$\beta_k$ is most naturally understood as a distinction between partial
and general equilibrium effects. The Wald ratio is the effect on the
lottery winner; the cascade correction $\beta_k - W_k$ is the effect on
everyone else. In this sense, $\beta_k$ is the general-equilibrium
shadow value of relaxing the capacity constraint in programme~$k$,
while $W_k$ is the partial-equilibrium shadow value.

A demand-system analogy makes this transparent. Suppose $K$ goods are supplied in fixed
quantities and prices $p = (p_1,\dots,p_K)$ clear the markets. The
first-stage matrix $\Pi$ corresponds to the matrix of demand
derivatives, with entries $\pi_{jk} = \partial E[Q_j]/\partial p_k$.
See Appendix~\ref{app:general} for details.

A marginal increase in the supply of good~$k$ requires a reduction in
its price $p_k$ to clear the market. The welfare gain from this
price change, holding all other prices fixed, is exactly the Wald
ratio $W_k$. This is a partial-equilibrium object: it ignores that
the change in $p_k$ shifts demand in all other markets through the
cross-price effects $\pi_{jk}$.

These cross-effects require further price adjustments in every other
market. Each adjustment has its own welfare effect and feeds back
into the system through additional cross-price responses. The
economy converges to a new equilibrium through a sequence of such
adjustments.

This is Walrasian t\^atonnement in its standard market economy form. The cascade identity $\bfT = \bfbeta$ then says: in a
fixed-supply economy, the 2SLS coefficient is the
general-equilibrium shadow value of relaxing the supply constraint.
The Wald ratio is the partial-equilibrium shadow value; the cascade
correction is the GE adjustment.

\subsection{Relationship to individual-level LATE estimates}
\label{sec:late_comparison}

The cascade identity targets a general-equilibrium policy effect,
whereas the existing literature on multi-treatment IV targets
partial-equilibrium individual treatment effects. The distinction
maps directly into the PE/GE interpretation of Section~\ref{sec:PEtoGE}.

\citet{kirkeboen2016} and \citet{heinesen2024instrumental} identify
local average treatment effects by conditioning on each individual's
next-best alternative and imposing an irrelevance condition: if
instrument~$k$ does not shift an individual into treatment~$k$, it
does not shift her into any other treatment either. Together, these
assumptions imply that within each next-best stratum, the off-diagonal
elements $\pi_{jk}$ are zero.

With no cross-program spillovers, the policy expansion does
not propagate the shock beyond programme~$k$, and the distinction
between partial and general equilibrium disappears. In this case,
the Wald ratio and the 2SLS coefficient coincide,
\[
\beta_k = W_k,
\]
and both recover the same partial-equilibrium effect.

The cascade identity instead allows for non-zero cross-terms and explicitly
incorporates the resulting spillovers. The off-diagonal elements
that \citet{kirkeboen2016} and \citet{heinesen2024instrumental}
eliminate are exactly the cross-program flows that transmit the
shock through the system. Their approach shuts down general-equilibrium
adjustment; ours traces it out by inverting the full matrix
$\Pi$.

The two approaches therefore answer different questions. For
individual-level pairwise LATEs, the irrelevance and next-best
assumptions are required and our result offers no shortcut. For the
system-level effect of a capacity expansion, these assumptions are
not needed: the first-stage matrix already contains the information
required to recover the general-equilibrium response.

\subsection{The marginal value of public funds}
\label{sec:mvpf}

The cascade interpretation gives the 2SLS coefficient a direct welfare
meaning. The marginal value of public funds for expanding program $k$ is
\begin{equation}
MVPF_k = \frac{T_k}{c_k} = \frac{\beta_k}{c_k},
\end{equation}
where $c_k$ is the marginal cost of a slot. The numerator $\beta_k$ is 
the 2SLS coefficient from the main specification. This object is the multi-treatment analog of the marginal policy 
effect in \citet{carneiro2010} and connects directly to the MVPF 
framework of \citet{hendren2020}: the welfare consequence of an 
infinitesimal expansion of program $k$, expressed per unit cost, 
identified from local variation at the admission margin without 
extrapolation across the full distribution of treatment effects. 
In the spirit of \citet{chetty2009}, the 2SLS coefficient serves 
as a sufficient statistic for welfare analysis in 
capacity-constrained systems; the equilibrium adjustment is 
already embedded in the estimator, requiring no additional 
structural modeling.

This contrasts with approaches such as \citet{kline2016} and \citet{gandil2025trickle}, where the
Wald ratio identifies only the direct effect on the lottery winner, and
the analyst must model or simulate the downstream re-allocations, and
their fiscal consequences, to recover the policy-relevant effect.
In the cascade framework, these equilibrium adjustments are already
embedded in the 2SLS coefficient. The researcher needs only the marginal
cost of capacity to conduct cost-benefit analysis; no additional
structural modeling of reassignment flows or costs in other programs is
required.

The result highlights a key advantage of the cascade interpretation:
while individual-level LATE parameters are often difficult to map into
policy-relevant welfare objects, the multi-treatment 2SLS coefficient
directly identifies the marginal policy effect needed for cost-benefit
analysis in capacity-constrained systems.

\section{Subsample effects in a capacity-constrained system}
\label{sec:heterogeneity}

A natural question in any treatment evaluation is whether effects
differ across subgroups, for example, by gender, age, or family
background. In the standard LATE framework with a single uncapped
treatment, the approach is straightforward: estimate the model
separately on each subgroup. The resulting coefficient identifies the
LATE for compliers within that subgroup. The cascade framework can
address the analogous question: what is the societal effect of a
group-targeted expansion, such as reserving an additional slot for
women? But the system-level nature of the estimand requires a
decomposition that accounts for how the cascade propagates through
mixed-group queues.

\subsection{Why subsample estimation answers the wrong question}

Consider estimating $\beta_k$ on women only, i.e. dropping all men from
the sample and running 2SLS with the female subsample. The resulting
coefficient $\beta_k^f$ inverts the women-only first-stage matrix
$\Pi^f$ and uses the women-only reduced form. By the cascade identity
applied to this subsample, $\beta_k^f$ equals the societal effect of
expanding programme $k$ by one \emph{women-only} slot, where the
entire downstream cascade also operates exclusively through women.

This is a fictional policy. The system does not have a gender. In the real admission system, when a woman
vacates a slot in programme $j$, the next person on $j$'s ranking list
may be a man. The cascade is gender-blind from the second step onwards.
The women-only estimate imposes a single-gender system that does not
exist, producing a parameter that does not correspond to any
implementable policy.

Two well-defined heterogeneity questions can be asked instead. We
describe each in turn.

\subsection{Decomposing the policy effect by subgroup}

The first question is: \emph{when programme $k$ expands by one slot,
how much of the total societal effect accrues to women versus men?}

This is answered by running full-sample 2SLS with group-specific
outcomes as dependent variables. Define $Y_i^f = f_i \cdot Y_i$ and
$Y_i^m = (1-f_i) \cdot Y_i$, where $f_i$ is a female indicator.
Estimate
\begin{equation}
Y_i^f = \alpha_0^f + \sum_{j=1}^K \beta_j^{f*} A_{ij} + \varepsilon_i^f
\end{equation}
by 2SLS on the full sample, using the same instruments and endogenous
variables as the main specification. Define $\beta_j^{m*}$ analogously
from the regression with $Y_i^m$ as the dependent variable.

Since $Y_i^f + Y_i^m = Y_i$, linearity of 2SLS implies
\begin{equation}
\beta_k^{f*} + \beta_k^{m*} = \beta_k
\end{equation}
for each $k$. The decomposition is exact and additive. Both
$\beta_k^{f*}$ and $\beta_k^{m*}$ inherit the cascade interpretation
from the full-sample specification: the first-stage matrix $\Pi$ is
the actual mixed-gender system, and the cascade operates through
mixed-gender queues at every step. We are simply partitioning the
outcome changes at each step into those experienced by women and
those experienced by men.

The coefficient $\beta_k^{f*}$ answers: ``of the total societal effect
of expanding programme $k$ by one slot, how much is due to changes in
women's outcomes?'' This includes women who enter programme $k$
directly, women who enter other programmes through the cascade, and
women who are displaced. It is a well-defined policy parameter that
requires no fictional single-gender system.

This approach generalises immediately to any partition of the
population: age groups, parental income quartiles, prior education
levels. For any exhaustive set of group indicators $g \in \{1,\dots,G\}$
with $\sum_g \mathbf{1}[g_i = g] = 1$, define $Y_i^{(g)} =
\mathbf{1}[g_i = g] \cdot Y_i$ and estimate full-sample 2SLS for
each group outcome. The coefficients sum to the total:
$\sum_g \beta_k^{(g)*} = \beta_k$.

\subsection{The effect of conditioning the marginal entrant}

The second question is: \emph{what is the total societal effect of
admitting one more woman (specifically) to programme $k$, rather than
one more person of unspecified gender?}

This question asks whether it matters for society who the marginal
entrant is. It is the closest analogue to the standard heterogeneous
LATE and requires a different construction.

When programme $k$ admits one more woman:
\begin{itemize}[nosep]
\item The direct effect is on a woman. Her outcome changes by the
      female-specific treatment effect at the margin, captured by the
      women-only reduced form and first stage: $\mathrm{RF}_k^f /
      \pi_{kk}^f$.
\item She may vacate a slot in programme $j$. The probability of this
      is determined by the women-only first stage: $-\pi_{jk}^f /
      \pi_{kk}^f$.
\item The vacated slot in $j$ is refilled from $j$'s mixed-gender
      ranking list. From this point on, the cascade is gender-blind.
      The societal effect of one additional slot in $j$ is the full-sample $\beta_j$.
\end{itemize}

The total effect of admitting one more woman to programme $k$ is
therefore
\begin{equation}
T_k^{|f} = \frac{\mathrm{RF}_k^f}{\pi_{kk}^f}
+ \sum_{j \neq k} \frac{-\pi_{jk}^f}{\pi_{kk}^f} \, \beta_j,
\label{eq:conditional_het}
\end{equation}
where $\mathrm{RF}_k^f$ and $\pi_{jk}^f$ are the reduced form and
first stage estimated on women only, and $\beta_j$ is the full-sample
2SLS coefficient.

The first term is female-specific: the direct effect on the marginal
woman that fills the new slot. The cascade correction uses female-specific vacancy creation
rates (since it is a woman who potentially departs programme $j$) but
full-population cascade effects (since the refill draws from the
mixed-gender queue). All ingredients are standard output
from two sets of 2SLS regressions: the women-only first stage and
reduced form, and the full-sample $\beta_j$. 

Comparing $T_k^{|f}$ with $T_k^{|m}$ (defined analogously for men)
reveals whether the gender of the marginal entrant matters for the
total societal effect. The two can differ for two reasons. First, the
direct effect may differ: the marginal woman admitted to programme $k$
may gain more or less than the marginal man would
($\mathrm{RF}_k^f / \pi_{kk}^f \neq \mathrm{RF}_k^m / \pi_{kk}^m$).
Second, the cascade triggered by a woman may differ from the cascade
triggered by a man, because women and men tend to vacate different
programmes when admitted to $k$ ($\pi_{jk}^f \neq \pi_{jk}^m$). For
example, if women admitted to a STEM programme disproportionately
vacate slots in health sciences while men disproportionately vacate
slots in business, the downstream reallocation differs by gender even though the refill at each
step draws from the same mixed-gender queue. If neither channel is
operative ($T_k^{|f} = T_k^{|m}$), the policy maker expanding
programme $k$ need not be concerned with the gender composition of
the marginal entrants.

\subsection{Replacement policies}
\label{sec:replacement}

The cascade framework extends naturally from capacity expansions to
\emph{replacement policies}: interventions that hold total capacity
fixed but change the composition of entrants. A quota, an affirmative
action rule, or a reallocation of slots across applicant types are all
examples. The key observation is that the marginal policy effect $T_k$
is defined as a derivative, so it is symmetric: expanding program~$k$
by one slot has effect $+T_k$, and contracting by one slot has effect
$-T_k$. A policy that simultaneously admits one type and removes
another is the difference of two conditional policy effects.

Suppose the planner replaces the marginal type-$m$ entrant in
program~$k$ with a type-$f$ entrant, holding total capacity fixed. This
is equivalent to two simultaneous operations: admit one more type-$f$
individual (effect $T_k^{|f}$) and remove one type-$m$ individual
(effect $-T_k^{|m}$). The net effect of the replacement is
\begin{equation}
R_k^{f \leftarrow m}
\;=\; T_k^{|f} - T_k^{|m}.
\label{eq:replacement}
\end{equation}

Each term decomposes using equation~\eqref{eq:conditional_het}:
\begin{align}
T_k^{|f} &= \frac{\mathrm{RF}_k^f}{\pi_{kk}^f}
+ \sum_{j \neq k} \frac{-\pi_{jk}^f}{\pi_{kk}^f} \, \beta_j, \\[4pt]
T_k^{|m} &= \frac{\mathrm{RF}_k^m}{\pi_{kk}^m}
+ \sum_{j \neq k} \frac{-\pi_{jk}^m}{\pi_{kk}^m} \, \beta_j.
\end{align}
The replacement effect $R_k^{f \leftarrow m}$ therefore has two
sources. First, the direct effects differ: the type-$f$ entrant may
gain more or less from program~$k$ than the type-$m$ entrant she
replaces ($\mathrm{RF}_k^f / \pi_{kk}^f \neq \mathrm{RF}_k^m /
\pi_{kk}^m$). Second, the two types trigger different cascades:
type-$f$ and type-$m$ individuals vacate different fallback programs
when admitted to~$k$ ($\pi_{jk}^f \neq \pi_{jk}^m$), so the
downstream reallocations differ even
though both cascades draw from the same mixed-type queues at every
subsequent step.

The replacement effect can be further decomposed by who benefits. For
any partition of the population into groups $g \in \{1,\dots,G\}$,
define group-specific outcomes $Y_i^{(g)} = \mathbf{1}[g_i = g]
\cdot Y_i$ and let $\beta_j^{(g)*}$ denote the full-sample 2SLS
coefficient from regressing $Y_i^{(g)}$ on the treatment indicators.
Then
\begin{equation}
R_k^{f \leftarrow m,(g)}
\;=\; T_k^{|f,(g)} - T_k^{|m,(g)},
\label{eq:replacement_group}
\end{equation}
where
\[
T_k^{|f,(g)} = \frac{\mathrm{RF}_k^{f,(g)}}{\pi_{kk}^f}
+ \sum_{j \neq k} \frac{-\pi_{jk}^f}{\pi_{kk}^f} \, \beta_j^{(g)*}
\]
measures how admitting one more type-$f$ individual to program~$k$
affects group~$g$'s outcomes, and $T_k^{|m,(g)}$ is defined
analogously. By linearity, the group-specific replacement effects sum
to the total: $\sum_g R_k^{f \leftarrow m,(g)} = R_k^{f \leftarrow m}$.

The replacement formula~\eqref{eq:replacement} answers a question that
arises naturally in any capacity-constrained system: what happens when
the planner changes \emph{who} fills a slot rather than \emph{how many}
slots exist? The expansion effect $T_k^{|f}$ and the contraction effect
$-T_k^{|m}$ each include a direct component and a cascade component.
The direct components measure the difference in treatment effects
between the two types at program~$k$'s margin. The cascade components
measure the difference in the downstream reallocations that each type
triggers: which programs they vacate, who fills those vacancies, and
what those individuals would otherwise have done.

The decomposition by beneficiary~\eqref{eq:replacement_group} makes
the distributional consequences of replacement policies transparent. A
gender quota in program~$k$, for instance, replaces the marginal man
with a woman. The group-specific replacement effects reveal how this
swap affects women's and men's outcomes separately, including all effects working through
the cascade. The direct effect of the quota operates on the two
individuals directly involved (the woman who enters and the man who is
displaced). The cascade effect operates on everyone else in the system
who is reallocated as a consequence. If the woman and the man who are
swapped would have attended different fallback programs, the cascade
triggered by the quota differs from the cascade that would have
occurred without it, and the distributional consequences extend well
beyond the two individuals at program~$k$'s margin.

All ingredients are standard regression output. The type-specific first
stages and reduced forms come from subsample IV regressions; the
full-population $\beta_j$ and $\beta_j^{(g)*}$ come from full-sample
2SLS with the overall and group-specific outcomes. No additional
structural modeling is required. The replacement effect is identified
under the same assumptions as the expansion effect: instrument
relevance and ranking-list refill.
\section{Applications: Effects of expanding fields of study}
\label{sec:application}
The cascade estimator is particularly well-suited to field-of-study effects because the
counterfactual for a student marginally admitted to, say, economics, STEM or medicine is
not ``no higher education'' but rather enrollment in whatever program
they would have attended instead. A single-instrument Wald ratio that
ignores this substitution confounds the effect of attending economics
with the effect of not attending a program that may itself increase
prosocial behavior. The cascade identity resolves this by estimating the
full societal effect of expanding each field by one slot, including the
downstream reallocation of students across substitute programs that such
an expansion triggers.

Our application uses Swedish data. In the Swedish university admissions system, applicants submit a single
ranked list of programs and are admitted through a national clearing
process administered by UHR (\emph{Universitets och högskolerådet}). Admission is determined by a merit score
(\emph{meritvärde}) based on upper secondary grades, results from the
Swedish Scholastic Aptitude Test (\emph{Högskoleprovet}), or prior
undergraduate credits. Each program allocates seats across parallel
tracks corresponding to these criteria. When qualified applicants exceed
available seats, the admission cutoff falls at the lowest merit score
among admitted students within each track. Ties at the cutoff are
resolved by lottery, conducted independently within each program and
admission round. The lottery produces a complete ranking of applicants
within each merit-score bracket, with position determined by chance
alone.

The lottery tie-breaking rule defines a natural set of \emph{pivotal
groups}: for each program and admission round, the pivotal group
consists of all applicants whose merit score equals the admission cutoff.
Within this group, admission is determined by the lottery draw alone,
while applicants above the cutoff are admitted with certainty and those
below are rejected with certainty. We construct a luck variable equal to
the applicant's normalized rank within the pivotal group, which is
uniformly distributed between 0 and 1 and independent of all
pre-determined characteristics conditional on group membership.
Crucially, the lottery ranking also determines the marginal applicant in
the natural policy experiment of expanding a program by one seat: it is
the highest-ranked rejected applicant, i.e. the next in the lottery
queue who would be admitted under a marginal capacity expansion. This
alignment between the instrument and the policy margin is what gives the
2SLS estimator its institutional interpretation.

This design is closely related to the regression discontinuity approach
used in related work on Scandinavian higher education
\citep{altmejd2021, kirkeboen2016}, which identifies causal effects by
comparing applicants just above and just below the admission cutoff. Our
setting differs in that applicants within the same merit-score bracket
are \emph{identical} on the running variable by construction, and
admission among them is determined by lottery. Continuity assumptions and
bandwidth choices are therefore not required, and the instrument can be
treated as the outcome of a randomization within each pivotal group.

Not all programs use pure lottery tie-breaking. Until 2012, a small
number of programs applied gender quotas at the admission margin. Other
programs resolve ties using \emph{Högskoleprovet} scores rather than
random draws, generating a deterministic rather than stochastic margin.
Both deviations are flagged in the administrative data. We exclude all
affected programs from the analysis, retaining only programs for which
the within-pivotal-group luck variable is consistent with random
assignment. The final analytical sample comprises 11,604 pivotal groups
across 28 admission rounds between 2008 and 2021, with an average
pivotal group size of approximately 12 applicants.

We group the 11,604 pivotal groups in our sample into seven mutually
exclusive fields using the standard Swedish SUN classification
(\emph{Standard f\"{o}r svensk utbildningsnomenklatur}). The fields are:
business including economics (SUN 34 and 314), social science (SUN 2--3
excluding business), pedagogy (SUN 1), medicine (SUN 721, 724, 727),
health (remaining SUN 7), STEM (SUN 4--5), and a residual other
category. Appendix~\ref{sec:aggregation} discusses the aggregation
weights and the conditions under which field-level estimates inherit the
cascade interpretation from the underlying program-level effects.

The structural equation is
\begin{equation}
Y_i = \beta_0 + \sum_{j=1}^{F} \beta_j A_{ij} + \mathbf{X}_i'\boldsymbol{\gamma}
+ \varepsilon_i,
\label{eq:secondstage_app}
\end{equation}
where $Y_i$ are application-specific outcomes in the years following
the admission decision, $A_{ij}$ indicates enrollment in field $j$, and
$\mathbf{X}_i$ contains field of application, prior giving (application 1), gender, age,
year of admission, and application priority. We estimate the system by
2SLS on the sample of pivotal-group members, using field-specific lottery
instruments $Z_{ij} = L_i \times \mathbf{1}(\text{applied to field } j)$
as excluded instruments, where
\begin{equation}
L_i = 1 - \frac{\text{lottery rank}_i}{(1+n_g)}
\end{equation}
is the applicant's normalized lottery rank within pivotal group $g$ of
size $n_g$. By construction, $L_i$ is symmetric and uniformly distributed
on $(0,1)$ within each pivotal group and therefore orthogonal to all
variables fixed within groups, including field of application, year of
admission, and merit score. Pivotal group fixed effects are consequently
not required for identification. Section~\ref{sec:balanceandrobustness}
verifies balance on predetermined individual characteristics and
demonstrates robustness to alternative control specifications (including
pivotal group fixed effects).

By the cascade identity of Section~\ref{sec:model}, the coefficient
$\hat{\beta}_j$ recovers the full societal effect of expanding field $j$
by one seat, inclusive of all downstream student reallocations across
substitute fields. A central feature of the cascade framework is that
the cascade effect itself requires no additional estimation beyond what
is already standard practice. The difference between the full 2SLS
coefficient $T_k$ and the single-instrument Wald ratio
$W_k = \mathrm{RF}_k / \pi_{kk}$ recovers exactly the downstream
reallocation effect:
\begin{equation}
T_k - W_k = \sum_{j \neq k} \frac{-\pi_{jk}}{\pi_{kk}} T_j.
\end{equation}
Both objects are standard regression output: $T_k$ is the coefficient
from the full 2SLS system and $W_k$ is the coefficient from a
single-instrument IV regression restricted to field $k$'s own lottery.
Their difference, reported in column~(5) of Table~\ref{tab:main},
measures the societal effect of the reallocation triggered by a marginal
expansion of field $k$; the effect on everyone else in the system, net
of the direct effect on the lottery winner.

\subsection{Application 1: Field of Study and Prosociality}
\label{sec:application1}
Universities have long claimed that their purpose extends beyond the
transmission of technical skills to the cultivation of prosocial values,
civic responsibility, and moral character.\footnote{Classical and
progressive educational theory emphasizes education as character
formation and preparation for democratic citizenship; see, e.g.,
\citet[Book VIII]{aristotle_politics} and \citet{dewey1916}. A large
empirical literature studies the causal impact of education in general on
prosociality and civic engagement, typically exploiting compulsory
schooling reforms or other sources of exogenous variation in years of
schooling \citep{dee2004, milligan2004, glaeser2002, helliwellputnam2007,
persson2015, almen2025}. \citet{dee2020} and \citet{willeck2022} provide
recent surveys.} The content of these commitments differs systematically
across fields. Programs in health care, social work, and education frame
their mission around public service and human development, while programs
in economics, business, and engineering emphasize competition, innovation,
and market-based problem solving. Whether these differences in mission
translate into differences in the prosocial behavior of graduates is an
open empirical question with direct implications for education policy.

A long-standing literature documents that economics and business students
exhibit lower levels of cooperative behavior, charitable giving, and
trust relative to students in other fields \citep[see,
e.g.,][]{frank1993, frey2003, bauman2011, carter1991, sundemo2025}.
Two competing explanations have proved difficult to disentangle. The
first is \emph{treatment}: economics education itself shapes values by
reinforcing a model of human behavior premised on narrow
self-interest---the \emph{homo economicus} assumption---or by
legitimizing self-interested action through the language of incentives
and efficiency \citep{marwell1981, frank1993}. The second is
\emph{selection}: students who choose economics are already less
prosocially oriented before enrollment, and the field attracts rather
than creates self-interested individuals \citep{bauman2011,
girardi2024}. Fully resolving the selection-versus-treatment debate
requires exogenous variation in field assignment itself, i.e., precisely what
the Swedish admissions lottery provides.

We measure prosociality using annual data on charitable giving drawn from
Swedish tax registers, which record all donations qualifying for tax
deductions,\footnote{The tax deduction for charitable donations was
available during 2012--2015 and reintroduced from 2019 onward. Giving
data are therefore observed for these years only. Since our admission
sample covers 2008--2021, nearly all applicants have at least one
post-admission observation of giving. However, for applicants admitted
before 2012, we lack a pre-admission measure of giving. To maintain a
common sample across all specifications, we set prior giving equal to
zero for these applicants. Section~\ref{sec:balanceandrobustness} shows
that the main estimates are robust to alternative treatments of this
variable.} for the full population of university applicants over
2008--2021. Giving is observed both before and after the admission
decision, allowing us to control for pre-existing levels and trends and
isolate the causal effect of field attendance. As a revealed-preference
measure that is private, voluntary, and financially costly, it is free from
the social desirability bias that plagues survey-based attitude measures.

Figure~\ref{fig:firststage_heatmap} displays the estimated first-stage
matrix. Each cell reports the effect of a high lottery score in one field
(column) on the probability of eventual admission (i.e.\ ever admitted
during 2008--2021) to the same or another field (row), measured over a
two-to-twelve year window following the lottery draw.\footnote{An applicant is only coded as ever admitted if she was admitted to the field through competition. This means that all admissions that did not have applicants in the final queue, i.e. that were not capacity constrained, are counted towards the outside option instead. Technically, the outside option thus includes a few students that were admitted to university programmes that had surplus capacity and could not fill all its slots.} The diagonal
elements are large, positive, and uniformly significant: a high lottery
score for a given field substantially raises the probability of eventual
admission to that same field, confirming strong compliance. The
off-diagonal elements are negative throughout, consistent with the
capacity-constraint mechanism underlying the cascade identity: a student
who secures admission to one field is displaced from others.

\begin{figure}[!ht]
\centering
\begin{tikzpicture}[x=1.6cm,y=1.3cm]
\def\maxabs{0.50}
\def\opSig{1}
\def\opNS{0.18}
\foreach \j/\lab in {
    0/Business, 1/Social science, 2/Teaching,
    3/Medicine,  4/Other health,    5/STEM, 6/Other}{
  \node[rotate=45, anchor=west] at (\j,7.2) {\lab};
}
\foreach \i/\lab in {
     0/Business, 1/Social science, 2/Teaching,
    3/Medicine,  4/Other health,    5/STEM, 6/Other}{
  \node[anchor=east] at (-0.6,6-\i+0.5) {\lab};
}
\node[font=\bfseries] at (3.5,8.8) {Field applied to};
\node[font=\bfseries, rotate=90] at (-2.4,3.5) {Field admitted to};
\newcommand{\cell}[6]{%
  \pgfmathsetmacro{\intensity}{sqrt(min(abs(#3)/\maxabs,1))*100}%
  \ifdim #3 pt > 0.00001pt
    \def\basecolor{red}%
  \else
    \ifdim #3 pt < -0.00001pt
      \def\basecolor{blue}%
    \else
      \def\basecolor{white}%
    \fi
  \fi
  \ifnum #5=1
    \def\myop{\opSig}%
  \else
    \def\myop{\opNS}%
  \fi
  \filldraw[draw=black!15,
            fill=\basecolor!\intensity!white,
            fill opacity=\myop]
      (#1, 6-#2) rectangle ++(1,1);
  \node[font=\scriptsize, align=center]
      at (#1+0.5, 6-#2+0.5) {#6};
}
\input{heatmap_cells}
\draw[black!30] (0,0) rectangle (7,7);
\end{tikzpicture}
\caption{First-stage matrix. Effect of the instrument (lottery draw) on ever (2008-2021) being admitted to a specific field. Rows indicate fields admitted to ($D_k$) and columns indicate field applied to. Red~=~positive coefficient, blue~=~negative.
Faded color~=~not significant ($|t|<1.96$). $t$-values in parentheses. 149579 observations, t-statistics computed using clustered standard errors on 11,604 pivotal groups.}
\label{fig:firststage_heatmap}
\end{figure}

The pattern of cross-effects reveals the substitution structure of the
Swedish higher education market. Business, social science, STEM, and
medicine exhibit substantial mutual substitution, indicating that
applicants to these competitive fields hold overlapping preferences and
treat them as close alternatives. Teaching and health, by contrast, show
negligible cross-effects with other fields. Their applicant pools are
largely distinct, and the counterfactual for a rejected applicant is more
likely to be no higher education at all rather than enrollment in a
substitute field. This reflects the lower average merit-score cutoffs for
teaching and non-specialist health programs, which draw from a different
part of the applicant distribution than the more competitive fields.
Appendix~\ref{sec:heatmap2} shows a more granular table at the 3-digit
SUN-level, revealing that there is also some substitution within fields.

These cross-effects are the structural inputs to the cascade estimator.
The off-diagonal elements of the first-stage matrix determine the cascade correction that separates the direct effect of
enrolling in a field from the indirect effect of displacing students into
substitute fields. The heatmap thus summarizes the identification
structure of the entire analysis: strong diagonals confirm that each
field's own lottery instrument provides credible variation, while the
pattern of off-diagonal substitution maps the reallocation mechanism
through which a marginal capacity expansion propagates across the system.

Table~\ref{tab:main} reports the main results. Column~(1) reports the
full 2SLS estimate $T_k$, interpreted as the total societal effect of
admitting one additional student to field $k$, including both the direct
effect on the marginal entrant and all downstream reallocation effects
through the capacity-constrained system. Column~(2) reports the
own-instrument Wald ratio $W_k$, which captures only the direct effect
on the lottery winner, ignoring what happens to displaced students.
Column~(5) reports the cascade $T_k - W_k$.

The results reveal substantial heterogeneity across fields. Teaching,
medicine, health, and STEM all generate large and statistically
significant total effects on charitable giving, with point estimates
ranging from 0.045 to 0.077. For teaching, virtually the entire effect is
direct: the cascade is small and insignificant (0.002), indicating that vacancies created by an additional teaching seat do not create prosocial effects. This is consistent with the
first-stage evidence that teaching draws from a distinct applicant pool
with few close substitutes. The counterfactual for a rejected teaching
applicant is more likely no higher education than enrollment in another
prosocial field. For medicine, health, and STEM, by contrast, the
cascade accounts for a meaningful share of the total effect, ranging from
0.01 to 0.02, all statistically significant. Expanding these
competitive fields displaces applicants who would otherwise have enrolled
in other programs with positive prosocial effects, and these programs
recruit, in turn, individuals from outside higher education. This
reallocation adds to the total societal benefit of expanding these
disciplines.

The results for business and social science are particularly instructive
for the debate on economics and prosociality. The own-instrument Wald
ratio for business ($W_k = 0.0042$) is precisely estimated and
statistically indistinguishable from zero. This is a clean
estimate of what attending business does to a student's prosocial
behavior relative to their counterfactual: it finds no effect whatsoever.
Students who are marginally admitted to business programs give no more or
less to charity than they would have had they attended their next-best
alternative. The total policy effect of expanding business with one seat
is, however, positive ($T_k = 0.0196$), and the difference is the cascade
($T_k - W_k = 0.0154$, $p < 0.01$). Opening one additional seat in
business draws a student away from their next-best alternative---social
science, medicine, or another field with a positive prosocial effect---and
filling that vacancy generates a societal gain that the own-instrument
estimate misses entirely.

\begin{table}[htbp]\centering
\caption{The effects of expanding fields of study on charitable giving}
\label{tab:main}

\begin{threeparttable}
\begin{tabular}{lccccc}
\hline
                    & $\beta_k$ & \multicolumn{3}{c}{$W_k$} & $\Delta$ \\
\cline{3-5}
                    &           & $b_k$ & $F$-stat & $N$ & (Cascade) \\

&(1)&(2)&(3)&(4)&(5) \\
\hline
\\
Business            &       .0196         &      .00416         &         427         &       19882         &       .0154\sym{***}\\
                    &     (.0149)         &     (.0132)         &                     &                     &     (.0054)         \\
\\
Social science      &       .0368\sym{**} &        .019         &         651         &       47011         &       .0177\sym{***}\\
                    &     (.0156)         &      (.014)         &                     &                     &    (.00602)         \\
\\
Teaching            &       .0579\sym{***}&       .0561\sym{***}&         449         &       11102         &      .00185         \\
                    &     (.0221)         &     (.0217)         &                     &                     &    (.00351)         \\
\\
Medicine            &       .0767\sym{***}&       .0586\sym{**} &         165         &       21169         &       .0181\sym{**} \\
                    &     (.0266)         &     (.0252)         &                     &                     &    (.00743)         \\
\\
Health              &       .0531\sym{***}&       .0449\sym{**} &         588         &       26867         &      .00815\sym{**} \\
                    &     (.0199)         &     (.0194)         &                     &                     &    (.00377)         \\
\\
STEM                &       .0452\sym{***}&       .0354\sym{**} &         449         &       20358         &      .00979\sym{**} \\
                    &     (.0168)         &     (.0152)         &                     &                     &    (.00498)         \\
\\
Other               &      -.0252         &      -.0349\sym{*}  &         255         &        3190         &       .0097         \\
                    &     (.0201)         &     (.0191)         &                     &                     &    (.00689)         \\
\\
$\bar{\beta}$       &      0.0377         &                     &                     &                     &                     \\
~SE($\bar{\beta}$)  &      0.0099         &                     &                     &                     &                     \\
KP F-stat           &      53.654         &                     &                     &                     &                     \\
Joint $\chi^2$ (betas=0)&      26.790         &                     &                     &                     &                     \\
Observations        &      149579         &                     &                     &                     &                     \\

\\
\hline
\end{tabular}

\begin{tablenotes}[flushleft]
\footnotesize
\item \textit{Notes:} Column (1), $\beta_k$, reports results from a single 2SLS regression with all fields jointly instrumented; columns (2)--(4), $W_k$, report results from separate single-instrument regressions for each field (hence the field-specific first-stage F-statistics and sample sizes). Outcome variable is a binary indicator equal to one if student donated a positive amount to charity in the future years, zero otherwise. Standard errors in parentheses. Standard errors for $\beta_k$ and $b_k$ are clustered at the pivotal group level. Standard errors for the cascade (5) are computed by cluster bootstrap (1000 reps), resampling at the level of pivotal groups. The F-statistic refers to the Kleibergen–Paap first-stage F-statistic, testing the joint significance of the excluded instruments. \sym{*} $p<0.10$, \sym{**} $p<0.05$, \sym{***} $p<0.01$. 
\end{tablenotes}

\end{threeparttable}
\end{table}

Social science presents a similar but empirically sharper pattern. The
total policy effect is positive and significant ($T_k = 0.037$,
$p < 0.05$), yet the own-instrument Wald ratio is half as
large and statistically insignificant ($W_k = 0.019$). The significant
cascade ($T_k - W_k = 0.018$, $p < 0.01$) accounts for the difference.
A researcher reporting only $T_k$ would conclude that expanding social
science raises charitable giving; a researcher reporting only $W_k$
would conclude it has no effect. Both conclusions are correct since they
answer different questions. The cascade decomposition reveals that the
policy effect of expanding social science operates substantially through the
reallocation of displaced students into fields with stronger prosocial
effects, rather than through the direct effect of social science
attendance itself. This is precisely the confound that a
single-instrument Wald estimate cannot disentangle, and that the cascade
framework resolves.

A broader implication for the literature is that estimates of the prosocial ``cost'' of economics or business education may conflate two distinct effects: the impact of studying business and the impact of not studying the alternative. Our results indicate that the latter drives much of the observed differences. In particular, fields such as pedagogy, health, and medicine appear to generate substantial increases in prosocial behavior. From this perspective, economics and business do not so much erode an underlying baseline of prosociality; rather, they fall short of the prosocial gains induced by competing fields.

The field-level estimates in Table~\ref{tab:main} are uniformly positive
(except for ``other'', a small group of diverse programs), suggesting that higher education broadly
increases prosocial behavior. The average $\hat{\beta}_k$ of 0.0377 is
statistically significant, providing causal evidence in favor of the
long-standing view that expanding university education cultivates moral
character beyond the transmission of technical skills (in this sample, the average charitable giving was about 0.1 in 2021; see Table \ref{tab:summary_fields} in Appendix \ref{sec:appendix} for field-specific averages). A natural question
is how this average, which reflects the mean effect of expanding each
field by one slot, relates to what a researcher would typically estimate
when asking whether higher education in general increases prosocial
behavior: regressing the outcome on a single indicator for any
enrollment, instrumented by $L_i$ or a set of field-specific
interactions. This is the standard single-regressor local average
treatment effect (LATE). Appendix~\ref{sec:aggregation} discusses the
relationship between the average of $\beta_1, \dots, \beta_K$ and this
pooled 2SLS estimate, and Table~\ref{tab:aggregation} reports the
results.

\subsection{Application 2: Gender quotas and STEM degree attainment}
\label{sec:stem_application}

A central objective of higher education policy in many countries is to
increase female representation in STEM fields: science, technology, engineering, and
mathematics \citep{blickenstaff2005, ceci2011}. Interventions range from
targeted recruitment and mentoring programs to explicit gender quotas or
preferential admission rules. Two distinct policy instruments are
commonly discussed. The first is \emph{expansion}: creating an additional
slot in a STEM program and reserving it for a woman. The second is
\emph{replacement}: holding capacity fixed and admitting a woman in place
of the marginal man. Both are standard tools of affirmative action.

The cascade framework developed in
Sections~\ref{sec:heterogeneity}--\ref{sec:replacement} provides a
unified analysis of both policy instruments. The expansion effect is the
gender-conditional policy effect $T_k^{|f}$; the replacement effect is
$R_k^{f \leftarrow m} = T_k^{|f} - T_k^{|m}$. Each can be decomposed
by beneficiary: how much of the effect accrues to women's STEM degrees
versus men's? The cascade matters because the woman who enters and the
man who is displaced vacate different fallback programs, triggering
different downstream reallocations through the mixed-gender admission
queues. A policy that appears to move one woman into STEM may, through
the cascade, also move men into or out of STEM at other points in the
system.

We apply this decomposition to Swedish admissions data. Among applicants to STEM programs in the pivotal groups, approximately 57.7 percent ultimately obtain a STEM degree, while only 27.2 percent of applicants are female (see Table \ref{tab:summary_fields} in Appendix \ref{sec:appendix}). 

We classify STEM programs into three tiers based on admission selectivity: competitive STEM (top quartile), mid-tier STEM, and non-STEM. The outcome is a binary indicator for holding a STEM degree in 2022. This three-tier classification allows cascade effects to operate \emph{within} STEM, which would be obscured by a single STEM indicator. For example, admitting a woman to a competitive STEM program may free up a spot in a mid-tier STEM program, which is then filled from a predominantly male applicant pool.


\begin{figure}[htbp]
\centering

\vspace{0.5em}

\newcommand{\cell}[6]{%
    \pgfmathsetmacro{\abscoef}{abs(#3)}%
    \pgfmathsetmacro{\pct}{min(\abscoef / 0.55 * 65, 65)}%
    \pgfmathtruncatemacro{\pctint}{\pct}%
    \pgfmathtruncatemacro{\ispos}{#3 > 0 ? 1 : 0}%
    \ifnum\ispos=1
        \fill[red!\pctint!white] (#1, #2) rectangle ++(1, 1);
    \else
        \fill[blue!\pctint!white] (#1, #2) rectangle ++(1, 1);
    \fi
    \draw[gray!40] (#1, #2) rectangle ++(1, 1);
    \ifnum#5=1
        \node[anchor=center, font=\scriptsize\bfseries] at (#1+0.5, #2+0.5) {%
            \begin{tabular}{c} #6 \end{tabular}};
    \else
        \node[anchor=center, font=\scriptsize, text=gray!50] at (#1+0.5, #2+0.5) {%
            \begin{tabular}{c} #6 \end{tabular}};
    \fi
}

\begin{minipage}[t]{0.47\textwidth}
\centering
\textbf{(a) Women}

\vspace{0.3em}

\begin{tikzpicture}[
    x=2.1cm, y=-2.1cm,
    every node/.style={font=\scriptsize}
]

\input{stem_heatmap_cells_female.tex}

\node[anchor=south, font=\scriptsize\bfseries, align=center] at (0.5, -0.12) {Non-\\STEM};
\node[anchor=south, font=\scriptsize\bfseries, align=center] at (1.5, -0.12) {Mid-\\tier};
\node[anchor=south, font=\scriptsize\bfseries, align=center] at (2.5, -0.12) {Comp.\\STEM};

\node[anchor=east, font=\scriptsize\bfseries, align=right] at (-0.12, 0.5) {Non-STEM};
\node[anchor=east, font=\scriptsize\bfseries, align=right] at (-0.12, 1.5) {Mid-tier};
\node[anchor=east, font=\scriptsize\bfseries, align=right] at (-0.12, 2.5) {Competitive};

\end{tikzpicture}
\end{minipage}%
\hfill
\begin{minipage}[t]{0.47\textwidth}
\centering
\textbf{(b) Men}

\vspace{0.3em}

\begin{tikzpicture}[
    x=2.1cm, y=-2.1cm,
    every node/.style={font=\scriptsize}
]

\input{stem_heatmap_cells_male.tex}

\node[anchor=south, font=\scriptsize\bfseries, align=center] at (0.5, -0.12) {Non-\\STEM};
\node[anchor=south, font=\scriptsize\bfseries, align=center] at (1.5, -0.12) {Mid-\\tier};
\node[anchor=south, font=\scriptsize\bfseries, align=center] at (2.5, -0.12) {Comp.\\STEM};

\node[anchor=east, font=\scriptsize\bfseries, align=right] at (-0.12, 0.5) {Non-STEM};
\node[anchor=east, font=\scriptsize\bfseries, align=right] at (-0.12, 1.5) {Mid-tier};
\node[anchor=east, font=\scriptsize\bfseries, align=right] at (-0.12, 2.5) {Competitive};

\end{tikzpicture}
\end{minipage}

\caption{\footnotesize First-stage matrix by gender: STEM tiers. Each cell reports the first-stage
coefficient and $t$-statistic (in parentheses) from regressing
enrollment in the row tier on the column tier's lottery instrument,
estimated separately for women (panel~a) and men (panel~b). Red
shading indicates positive coefficients (own effects on the diagonal);
blue indicates negative (cross-program substitution). Bold entries are
significant at the 5\% level. All regressions include tier-of-application dummies,
age, year fixed effects, and application priority as controls. Standard
errors clustered by pivotal group.}
\label{fig:stem_heatmap_gender}

\end{figure}

Figure~\ref{fig:stem_heatmap_gender} displays the gender-specific first-stage matrices. The diagonal elements confirm strong compliance for both genders, with women showing somewhat larger own effects in competitive STEM (0.549 vs.\ 0.443) and mid-tier STEM (0.394 vs.\ 0.359), reflecting that women in pivotal groups for STEM programs are more strongly affected by the lottery, which is consistent with men having more fallback options within STEM. The off-diagonal elements reveal the substitution patterns that drive the cascade. The key asymmetry is in the competitive STEM column: when women win the competitive STEM lottery, they vacate mid-tier STEM at a rate of 0.134; about 24\% of the own effect. For men, the corresponding rate is 0.229, or 52\% of the own effect. Men admitted to competitive STEM are thus twice as likely as women to have come from mid-tier STEM rather than from outside STEM. Conversely, women admitted to competitive STEM vacate non-STEM slots at a higher rate than men (0.088 vs.\ 0.068), indicating that a larger share of marginal women in competitive STEM would otherwise have left STEM entirely. This asymmetry is the mechanism behind the gendered cascade: when a woman enters competitive STEM, the mid-tier vacancy she creates is smaller, but the direct effect on the marginal women is bigger. When a man enters, the direct effect on the marginal man is smaller, but the cascade is bigger since his admission frees more slots in mid-tier STEM. 

Table~\ref{tab:stem_main} reports the full-sample cascade decomposition.
Expanding competitive STEM by one slot generates 0.215 STEM degrees in
total, of which the cascade accounts for 0.087 ($p < 0.01$)---roughly
40\% of the total effect. The cascade is driven by substitution within
the STEM hierarchy: the marginal competitive-STEM admit vacates a
mid-tier slot, which is refilled from mid-tier's queue. For mid-tier
STEM, the cascade is smaller (0.026, $p < 0.05$) because the marginal
admit comes predominantly from outside STEM. The non-STEM row confirms
the logic from the other direction: the Wald ratio is significantly
negative ($-0.054$), but the positive cascade (0.034) partially offsets
the loss as vacated STEM slots are refilled.

\begin{table}[htbp]\centering
\caption{Expanding STEM tiers: effect on STEM degree attainment}
\label{tab:stem_main}
\begin{threeparttable}
\begin{tabular}{lccc}
\hline
                    & $\beta_k$ & $W_k$ & Cascade \\
                    & (1) & (2) & (3) \\
\hline
                    &\multicolumn{1}{c}{(1)}         &\multicolumn{1}{c}{(2)}         &\multicolumn{1}{c}{(3)}         \\
[1em]
Non-STEM            &     -0.0200         &     -0.0537\sym{***}&      0.0337\sym{***}\\
                    &    (0.0160)         &    (0.0148)         &   (0.00871)         \\
[1em]
Mid-tier STEM       &       0.219\sym{***}&       0.192\sym{***}&      0.0263\sym{*}  \\
                    &    (0.0524)         &    (0.0476)         &    (0.0137)         \\
[1em]
Competitive STEM    &       0.215\sym{***}&       0.128\sym{***}&      0.0870\sym{***}\\
                    &    (0.0479)         &    (0.0390)         &    (0.0272)         \\
[1em]
Observations        &      149579         &           .         &           .         \\
KP F-statistic      &       93.03         &           .         &           .         \\

\\
\hline
\end{tabular}
\begin{tablenotes}[flushleft]
\footnotesize
\item \textit{Notes:} Outcome is a binary indicator for holding a STEM
degree. Treatments are admission to one of three tiers defined by
programme competitiveness. Column~(1): full 2SLS estimate $\beta_k =
T_k$. Column~(2): single-instrument Wald ratio $W_k$. Column~(3):
cascade $T_k - W_k$. Standard errors for columns~(1)--(2) are clustered
by pivotal group; column~(3) uses bootstrap standard errors (1000
replications). \sym{*} $p<0.10$, \sym{**} $p<0.05$, \sym{***} $p<0.01$.
\end{tablenotes}
\end{threeparttable}
\end{table}

Table~\ref{tab:admit_gender} applies the gender-conditional formula of
equation~\eqref{eq:conditional_het} with STEM degree as the outcome,
decomposed into women's and men's STEM degrees. The headline
result is in the competitive STEM row in Panel A: admitting one more woman
generates 0.250 total STEM degrees, of which 0.197 are women's and
0.053 are men's ($p < 0.01$), approximately one man's STEM degree for
every four women's. The effect on men is entirely a cascade effect: the
woman admitted to competitive STEM vacates a mid-tier slot that is
refilled from a predominantly male queue.
 
The mid-tier row reveals a different pattern. Admitting one more woman
to mid-tier STEM generates essentially zero additional women's STEM
degrees ($-0.022$, insignificant): the marginal woman does not increase her chance of ending up with a STEM degree when admitted to mid-tier STEM. Mid-tier
STEM is thus not the binding constraint for women's STEM attainment. The
effect on men is positive (0.028) but imprecisely estimated. 
 
The non-STEM row is instructive. Admitting one more woman to a non-STEM
program has no net effect on total STEM degrees (0.001), but the
decomposition shows why: it reduces women's STEM degrees by 0.016
($p < 0.01$) and increases men's by the same amount ($p < 0.01$). Some
women drawn into non-STEM would otherwise have been in STEM; these
vacated STEM slots are mostly filled by men. The policy is almost exactly
zero-sum for total STEM, but it redistributes degrees from women to men.

\begin{table}[htbp]\centering
\caption{Expanding field capacity on STEM degree attainment}
\label{tab:admit_gender}
\begin{threeparttable}
\renewcommand{\arraystretch}{1.15}
\begin{tabular}{lcccc}
\toprule
                    & Total & Direct & $\to$ Women's & $\to$ Men's \\
                    & $T_k^{|g}$ & $W_k^{|g}$ & $T_k^{|g,f}$ & $T_k^{|g,m}$ \\
                    & (1) & (2) & (3) & (4) \\
\midrule
\multicolumn{5}{c}{\textit{A. Admit one more woman}} \\[0.3em]
Non-STEM            &    0.000666         &     -0.0168         &     -0.0155         &      0.0162\sym{**} \\
                    &    (0.0166)         &    (0.0165)         &    (0.0168)         &   (0.00662)         \\
Mid-tier STEM       &     0.00625         &     -0.0290         &     -0.0215         &      0.0278\sym{*}  \\
                    &    (0.0900)         &    (0.0917)         &    (0.0932)         &    (0.0146)         \\
Competitive STEM    &       0.250\sym{***}&       0.197\sym{***}&       0.197\sym{***}&      0.0529\sym{**} \\
                    &    (0.0686)         &    (0.0684)         &    (0.0703)         &    (0.0211)         \\
Observations        &       88545         &           .         &           .         &           .         \\
KP F-statistic      &       32.93         &           .         &           .         &           .         \\

\\[0.5em]
\multicolumn{5}{c}{\textit{B. Admit one more man}} \\[0.3em]
Non-STEM            &     -0.0558\sym{*}  &      -0.119\sym{***}&     0.00707         &     -0.0629\sym{*}  \\
                    &    (0.0327)         &    (0.0313)         &   (0.00888)         &    (0.0327)         \\
Mid-tier STEM       &       0.312\sym{***}&       0.286\sym{***}&     0.00497         &       0.307\sym{***}\\
                    &    (0.0636)         &    (0.0592)         &   (0.00704)         &    (0.0636)         \\
Competitive STEM    &       0.206\sym{***}&      0.0961\sym{*}  &    0.000569         &       0.206\sym{***}\\
                    &    (0.0578)         &    (0.0495)         &    (0.0189)         &    (0.0581)         \\
Observations        &       61034         &           .         &           .         &           .         \\
KP F-statistic      &       52.47         &           .         &           .         &           .         \\

\\
\bottomrule
\end{tabular}
\begin{tablenotes}[flushleft]
\footnotesize
\item \textit{Notes:} Outcome is a binary indicator for STEM degree.
Column~(1) reports the total effect of admitting one more student of gender $g$
to tier~$k$. Column~(2) reports the direct effect on the marginal admitted student.
Columns~(3)--(4) decompose the total effect into effects on women's and men's STEM
degrees. Bootstrap standard errors clustered by pivotal group.
\sym{*} $p<0.10$, \sym{**} $p<0.05$, \sym{***} $p<0.01$.
\end{tablenotes}
\end{threeparttable}
\end{table}

Panel B of Table~\ref{tab:admit_gender} reports the symmetric analysis for men. The contrast with Panel A is sharp. Admitting one more
man to competitive STEM generates 0.205 STEM degrees, virtually all of
which are men's (0.205) and essentially none women's (0.001). Unlike the
female case, the male cascade does not benefit the other gender: men at
the competitive STEM margin vacate programs whose queues do not contain
women who would gain STEM degrees. The asymmetry traces directly to the
first-stage matrices in Figure~\ref{fig:stem_heatmap_gender}: men
admitted to competitive STEM vacate mid-tier STEM at a much higher rate
than women do (0.229 vs.\ 0.134), but the mid-tier refill draws
predominantly from men, so the cascade circulates within the male STEM
pipeline without pulling women in.

Table~\ref{tab:replace} reports the replacement effect
$R_k^{f \leftarrow m} = T_k^{|f} - T_k^{|m}$: the net consequence of
replacing the marginal man with a woman, holding capacity fixed. The
point estimates follow mechanically from Table \ref{tab:admit_gender}; the contribution here is inference on the difference.
 
At competitive STEM, the net effect on total STEM degrees is small and
insignificant (0.043): the quota barely changes total STEM production.
But the decomposition reveals that this near-zero masks a significant
redistribution: the quota generates 0.196 additional women's STEM
degrees ($p < 0.01$) and destroys 0.153 men's ($p < 0.01$). The quota
redistributes rather than creates.
 
The redistribution is not one-for-one. Each woman who replaces a man
generates 0.196 women's STEM degrees but eliminates only 0.153 men's
--- the net gain of 0.043 reflects the asymmetry in the two cascades
documented above. Because the woman's fallback is more likely to be
outside STEM while the man's is mid-tier STEM, removing the man
destroys fewer downstream STEM degrees than admitting the woman creates.
The quota is mildly STEM-expanding, but the effect is too small to
distinguish from zero.

\begin{table}[htbp]\centering
\caption{Replace the marginal man with a woman: net effect on STEM degrees}
\label{tab:replace}
\begin{threeparttable}
\begin{tabular}{lcccc}
\hline
                    & Net total & Net direct & $\to$ Women's & $\to$ Men's \\
                    & $\Delta T_k$ & $\Delta W_k$ & $\Delta T_k^{f}$ & $\Delta T_k^{m}$ \\
                    & (1) & (2) & (3) & (4) \\
\hline
Non-STEM            &      0.0565         &       0.102\sym{***}&     -0.0226         &      0.0791\sym{**} \\
                    &    (0.0356)         &    (0.0361)         &    (0.0171)         &    (0.0320)         \\
Mid-tier STEM       &      -0.306\sym{***}&      -0.315\sym{***}&     -0.0265         &      -0.279\sym{***}\\
                    &     (0.108)         &     (0.111)         &    (0.0904)         &    (0.0604)         \\
Competitive STEM    &      0.0433         &       0.101         &       0.196\sym{***}&      -0.153\sym{***}\\
                    &    (0.0865)         &    (0.0887)         &    (0.0677)         &    (0.0526)         \\

\\
\hline
\end{tabular}
\begin{tablenotes}[flushleft]
\footnotesize
\item \textit{Notes:} Each cell is $T_k^{|f} - T_k^{|m}$: the net
effect of replacing the marginal man with a woman in tier~$k$, holding
capacity fixed. Column~(1): net effect on total STEM degrees.
Column~(2): net direct effect. Columns~(3)--(4): net effect on women's
and men's STEM degrees. A positive value in column~(3) and negative
in column~(4) indicates redistribution from men to women. Bootstrap
standard errors (1000 replications, clustered by pivotal group).
\sym{*} $p<0.10$, \sym{**} $p<0.05$, \sym{***} $p<0.01$.
\end{tablenotes}
\end{threeparttable}
\end{table}

These results illustrate a general feature of affirmative action in
capacity-constrained systems: the policy does not operate in a vacuum.
Whether the instrument is expansion or replacement, the cascade
propagates the intervention through the system, generating
distributional consequences that extend beyond the individuals directly
targeted. In our setting, expanding competitive STEM toward women
generates one man's STEM degree for every four women's; replacing the
marginal man with a woman redistributes degrees across genders but
barely changes the total. The cascade framework makes these downstream
effects visible and measurable using only standard IV output. The same logic applies to any rationed setting where slots are
reallocated across types---including corporate board quotas
\citep{matsa2013, ahern2012} and race-based university admissions
policies, where banning or introducing affirmative action triggers
cascading reallocations across tiers of the system
\citep{bleemer2022}.

\section{Conclusion}
\label{sec:conclusion}

The prevailing view in the econometric literature is that 2SLS with
multiple treatments yields coefficients that are difficult to interpret
under heterogeneous treatment effects. We show that in
capacity-constrained allocation systems---a class that includes
university admissions, school choice, medical residency matching, public
housing, and other rationed settings---the 2SLS coefficient $\beta_k$
has a direct policy interpretation: the total societal effect of
expanding treatment $k$ by one slot, including all cascading
reallocations through the system. The result is an algebraic identity
that holds for any first-stage matrix, requires only instrument relevance
and that the allocation mechanism transmits supply expansions through the
instruments, and imposes no restrictions on treatment effect heterogeneity
or individual choice behavior. The identity applies not only to
queue-based systems but to any allocation mechanism operating over
goods with fixed supply, including competitive markets with price
instruments.

For applied researchers working in capacity-constrained settings, the
practical implication is straightforward. Multi-treatment 2SLS is not
merely a convenience for combining instruments, it is the estimator
that answers the policy question of whether to expand capacity. The
cascade decomposition into own effects and downstream reallocation
effects requires no additional estimation beyond what is already standard
practice: the difference between the multi-treatment 2SLS coefficient and
the single-instrument Wald ratio recovers the cascade directly.

The result has limitations. Assumption~\ref{ass:margin} requires that
the instrument and the policy operate on the same margin: a condition
satisfied by construction in centralized systems but potentially violated
in decentralized settings where vacancies are filled through different
channels than those generating the instrument variation. The fixed-supply
condition is essential: if competing programs can expand endogenously in
response to the demand pressure created by the cascade, the identity
breaks down. In practice, most publicly funded systems enforce fixed
supply through administrative budget separation, regulatory constraints,
or policy choice, but the assumption should be assessed case by case.
Finally, the cascade identity characterizes a system-level policy effect
that is in general difficult to decompose into individual-level treatment
effects. Under homogeneous effects, the system-level and individual-level
parameters coincide; under heterogeneity, they diverge, requiring the
researcher to be clear about which parameter is being targeted.

Our empirical applications demonstrates the cascade framework using
Swedish university admissions. The results show
that the cascade correction is quantitatively important: for competitive
fields like business and social science, the entire policy effect of
expansion operates mainly through the downstream reallocation of displaced
students rather than through the direct effect on the marginal entrant.
This finding reframes the long-standing debate on whether economics
education erodes prosocial values: the apparent prosociality gap between
economics students and others is driven primarily by the prosocial
effects of the fields that economics students would otherwise have
attended, not by any negative effect of economics itself. The second application demonstrates that gender-targeted STEM admission policies generate distributional consequences that extend well beyond the individuals directly affected: approximately one-fifth of the STEM-degree effect of admitting a woman to competitive STEM accrues to men through the cascade.

\newpage
\bibliographystyle{econ}
\bibliography{cascade}

\newpage

\appendix

\section{Application appendix}
\label{sec:appendix}
\subsection{Heatmap with more granular programs}
\label{sec:heatmap2}
%
%
%

\begin{figure}[!ht]
\centering
\resizebox{\textwidth}{!}{%
\begin{tikzpicture}[x=0.42cm, y=0.42cm]

\def\maxabs{0.35}
\def\opSig{1}
\def\opNS{0.15}

%
\newcommand{\cell}[6]{%
  \pgfmathsetmacro{\cellcoef}{#3}%
  \pgfmathsetmacro{\intensity}{sqrt(min(abs(\cellcoef)/\maxabs,1))*100}%
  \ifdim \cellcoef pt > 0.00001pt
    \def\basecolor{red}%
  \else
    \ifdim \cellcoef pt < -0.00001pt
      \def\basecolor{blue}%
    \else
      \def\basecolor{white}%
    \fi
  \fi
  \ifnum #5=1
    \def\myop{\opSig}%
  \else
    \def\myop{\opNS}%
  \fi
  \filldraw[draw=black!10,
            fill=\basecolor!\intensity!white,
            fill opacity=\myop]
      (#1, 46-#2) rectangle ++(1,1);
}

\foreach \j/\lab in {
   0/{Tech. intro.},
   1/{Educ.\ (gen.)},
   2/{Pedagogy},
   3/{Early child.\ tchg},
   4/{Primary tchg},
   5/{Secondary tchg},
   6/{Vocational tchg},
   7/{Educ.\ (other)},
   8/{Arts \& Media},
   9/{Humanities},
  10/{Soc.\ \& Behav.\ Sci.},
  11/{Psychology},
  12/{Sociology},
  13/{Pol.\ Science},
  14/{Economics},
  15/{Journalism},
  16/{Business},
  17/{Law},
  18/{Biology \& Env.},
  19/{Physics \& Chem.},
  20/{Mathematics},
  21/{ICT},
  22/{Engineering (gen.)},
  23/{Mech.\ Engineering},
  24/{Energy \& Electr.},
  25/{Electronics \& IT},
  26/{Chem.\ \& Biotech.},
  27/{Vehicle Eng.},
  28/{Industrial Econ.},
  29/{Env.\ Engineering},
  30/{Materials},
  31/{Civil Engineering},
  32/{Agriculture},
  33/{Veterinary},
  34/{Healthcare (gen.)},
  35/{Medicine},
  36/{Nursing},
  37/{Dentistry},
  38/{Health Technology},
  39/{Therapy \& Rehab.},
  40/{Pharmacy},
  41/{Healthcare (other)},
  42/{Social Work},
  43/{Personal Services},
  44/{Transport},
  45/{Occup.\ Safety},
  46/{Security Services}
}{
  \node[rotate=60, anchor=west, font=\scriptsize] at (\j+0.5, 47.3) {\lab};
}

\foreach \i/\lab in {
   0/{Tech. intro.},
   1/{Educ.\ (gen.)},
   2/{Pedagogy},
   3/{Early child.\ tchg},
   4/{Primary tchg},
   5/{Secondary tchg},
   6/{Vocational tchg},
   7/{Educ.\ (other)},
   8/{Arts \& Media},
   9/{Humanities},
  10/{Soc.\ \& Behav.\ Sci.},
  11/{Psychology},
  12/{Sociology},
  13/{Pol.\ Science},
  14/{Economics},
  15/{Journalism},
  16/{Business},
  17/{Law},
  18/{Biology \& Env.},
  19/{Physics \& Chem.},
  20/{Mathematics},
  21/{ICT},
  22/{Engineering (gen.)},
  23/{Mech.\ Engineering},
  24/{Energy \& Electr.},
  25/{Electronics \& IT},
  26/{Chem.\ \& Biotech.},
  27/{Vehicle Eng.},
  28/{Industrial Econ.},
  29/{Env.\ Engineering},
  30/{Materials},
  31/{Civil Engineering},
  32/{Agriculture},
  33/{Veterinary},
  34/{Healthcare (gen.)},
  35/{Medicine},
  36/{Nursing},
  37/{Dentistry},
  38/{Health Technology},
  39/{Therapy \& Rehab.},
  40/{Pharmacy},
  41/{Healthcare (other)},
  42/{Social Work},
  43/{Personal Services},
  44/{Transport},
  45/{Occup.\ Safety},
  46/{Security Services}
}{
  \node[anchor=east, font=\scriptsize] at (-0.3, 46-\i+0.5) {\lab};
}

\node[font=\small\bfseries] at (23.5, 56)
  {Lottery instrument (column) $\to$ first-stage outcome (row)};

\input{heatmap_cells_programs}

\draw[black!30] (0,0) rectangle (47,47);

\end{tikzpicture}
}
\caption{First-stage matrix for more granular field definitions. Each cell depicts the sign and strength of the lottery instrument on being admitted to each program
(first-stage coefficients). Red~=~positive, blue~=~negative.
Faded~=~not significant ($|t|<1.96$).}
\label{fig:firststage_programs}
\end{figure}

\newpage

\subsection{Aggregation across blocks of programs}
\label{sec:aggregation}

Programs are often grouped into coarser \emph{blocks}---for example fields of study---either
because program-level instruments are unavailable or because the research question concerns
broader categories. In practice, grouping is also necessary for inference, computation, and
interpretation: with thousands of programs, estimating a fully disaggregated system is
infeasible, and even where feasible, the resulting program-level estimates are too noisy to
be informative. The standard approach is to assume
treatment homogeneity within blocks---that is, $\beta_m = \beta_B$ for all $m \in
\mathcal{J}_B$---under which the block coefficient inherits the T-interpretation directly:
it equals the societal effect of expanding the block by one slot, since all programs within
the block have the same policy effect by assumption.

Let $A_{im}$ denote enrollment in program $m$ for $m=1,\dots,M$. For a block $B$
consisting of programs $\mathcal{J}_B \subseteq \{1,\dots,M\}$ define the block treatment
as
\begin{equation}
A_{iB}=\sum_{m\in\mathcal{J}_B}A_{im},
\end{equation}
which is binary when programs are mutually exclusive. The most granular specification is
\begin{equation}
Y_i=\beta_0+\sum_{m=1}^M\beta_m A_{im}+u_i,
\end{equation}
while the block-level specification replaces the program indicators with the block treatment,
\begin{equation}
Y_i=\beta_0+\beta_B A_{iB}+v_i.
\end{equation}
When both specifications are estimated on the same sample with the same controls, the block
coefficient can be written as a weighted average of the underlying program coefficients,
\begin{equation}
\beta_B=\sum_{m\in\mathcal{J}_B}\omega_{m|B}\beta_m,
\qquad
\omega_{m|B}=
\frac{\operatorname{Cov}(\tilde Z_B,\tilde A_{im})}
{\sum_{j\in\mathcal{J}_B}\operatorname{Cov}(\tilde Z_B,\tilde A_{ij})},
\label{eq:blockweights}
\end{equation}
where tildes denote residuals after partialling out included exogenous variables. Under
treatment homogeneity within blocks, $\beta_B = \beta_m$ for all $m \in \mathcal{J}_B$,
and the weighted average collapses to the common value regardless of the weights. More
generally, without homogeneity, the block coefficient aggregates program-specific policy
effects with weights determined by how strongly the block-level instrument shifts enrollment
into each constituent program.

Without program-specific instruments---when a single pooled instrument is
used---the weights in equation~\eqref{eq:blockweights} can be negative.
This occurs when monotonicity fails: if applicants winning a competitive
program vacate a secondary program within the same block at a higher rate
than the instrument directly fills it, the block coefficient remains a
well-defined linear combination of the underlying policy effects but need
not lie between them, and its policy interpretation becomes difficult to
characterize.

A natural choice of block instrument is the vector of program-specific
instruments $Z_{im} = \text{luck}_i \times \mathbf{1}(\text{applied to
}m)$, one for each program in the block. Using these interacted instruments
and including program-of-application dummies $\mathbf{1}(\text{applied to
}m)$ as exogenous controls eliminates cross-program contamination and
ensures strictly positive weights as in equation~\eqref{eq:diagweights}. The block IV estimator is
\begin{equation}
\beta_B
=
\frac{\operatorname{Cov}(\tilde Z_B^{\mathrm{eff}},\tilde Y)}
{\operatorname{Cov}(\tilde Z_B^{\mathrm{eff}},\tilde A_{iB})},
\end{equation}
where $\tilde Z_B^{\mathrm{eff}}$ is the fitted value from projecting $A_{iB}$ onto the
interacted instruments and controls. Because $Z_{im}$ is zero for all applicants outside
program $m$'s pivotal group, the cross-program covariances $\operatorname{Cov}(\tilde
Z_{im}, \tilde A_{ij})$ are zero for $j\neq m$ after partialling out the
program-of-application dummies. The weights in equation~\eqref{eq:blockweights} therefore
reduce to
\begin{equation}
\omega_{m|B}=\frac{\pi_{mm}}{\sum_{j\in\mathcal{J}_B}\pi_{jj}},
\label{eq:diagweights}
\end{equation}
where $\pi_{mm}=\operatorname{Cov}(\tilde Z_{im},\tilde A_{im})$ is the direct first-stage
effect of program $m$'s own instrument on enrollment in program $m$. These weights are
strictly positive and sum to one, so $\beta_B$ is a proper weighted average of the
program-specific policy effects $\beta_m$.

When the block covers all programs in the system, the pooled IV estimator with
program-specific instruments has a particularly clean interpretation. The first stage
of each interacted instrument $Z_{im}$ on total enrollment $A_{iB}$ is zero for
inframarginal students, i.e. those who would enroll somewhere regardless of any individual
lottery outcome, and positive only for students on the extensive margin of higher
education. The block coefficient therefore measures the societal effect of expanding
university capacity for students who would not otherwise attend, with weights proportional
to how many such students each program draws from outside higher education altogether.
A program that attracts only students who would otherwise have enrolled elsewhere
receives zero weight, even if its program-specific policy effect $\beta_k$ is large:
it contributes to the cascade within higher education but not to the extensive margin
that this estimator identifies.

Table~\ref{tab:aggregation} reports this estimate under three instrument sets of
increasing granularity. The coefficient on any higher education enrollment is
stable across columns---0.044, 0.040, and 0.038---and remains highly significant
throughout, despite the Kleibergen-Paap F-statistic falling from 473 with a single
pooled instrument to 16 with 47 subfield instruments. This stability is informative
beyond mere robustness: under treatment homogeneity, all three specifications
identify the same parameter, and disagreement across instrument sets would signal
either instrument invalidity or heterogeneous treatment effects at different margins
of the distribution; the overidentification logic of \citet{angristpischke2009} and the
monotonicity diagnostics of \citet{heckman2005}. The consistency of the estimates
across instrument designs with very different first-stage structures is therefore
evidence that monotonicity holds within the full block of higher education and that
the estimated effect is not an artifact of a particular instrument's identifying
margin. Expanding higher education capacity along the extensive margin increases
charitable giving by approximately 0.04, an effect that is robust to the choice of
instrument granularity.

\begin{table}[htbp]\centering
\caption{Aggregation: effect of any higher education}
\label{tab:aggregation}
\begin{threeparttable}
\begin{tabular}{lccc}
\hline
                    &\multicolumn{1}{c}{(1)}         &\multicolumn{1}{c}{(2)}         &\multicolumn{1}{c}{(3)}         \\
\hline
Any higher education&      0.0443\sym{***}&      0.0403\sym{***}&      0.0377\sym{***}\\
                    &    (0.0106)         &    (0.0103)         &   (0.00979)         \\
[1em]
Prior giving        &       0.461\sym{***}&       0.461\sym{***}&       0.456\sym{***}\\
                    &   (0.00894)         &   (0.00895)         &   (0.00892)         \\
[1em]
Female              &      0.0328\sym{***}&      0.0327\sym{***}&      0.0334\sym{***}\\
                    &   (0.00105)         &   (0.00105)         &   (0.00107)         \\
[1em]
Age                 &     0.00305\sym{***}&     0.00302\sym{***}&     0.00271\sym{***}\\
                    &  (0.000125)         &  (0.000122)         &  (0.000123)         \\
\hline
Instruments         &Single (luck)         &    7 fields         &47 subfields         \\
Field FE            &         Yes         &         Yes         &         Yes         \\
Year FE             &         Yes         &         Yes         &         Yes         \\
KP F-statistic      &     473.948         &      98.270         &      16.559         \\
Observations        &      149579         &      149579         &      149579         \\
\\
\hline
\end{tabular}
\begin{tablenotes}[flushleft]
\footnotesize
\item \textit{Notes:} The dependent variable is an indicator for charitable giving
in the years following the admission decision. The endogenous variable is enrollment
in any higher education program. Column~(1) uses a single pooled lottery instrument;
column~(2) uses seven field-specific lottery instruments interacted with
field-of-application dummies; column~(3) uses 47 subfield-specific lottery
instruments interacted with subfield-of-application dummies (see
Appendix~\ref{sec:heatmap2} for the first-stage matrix). All specifications include
field-of-application dummies, year fixed effects, prior giving, gender, age, and
application priority as controls. Standard errors clustered by pivotal group. KP
F-statistic is the Kleibergen-Paap rk Wald F-statistic for weak identification.
\end{tablenotes}
\end{threeparttable}
\end{table}

\newpage 

\subsection{Descriptive statistics}
Our sample comprises individuals at the program admission threshold (the pivotal group). They are typically young and predominantly female, except in business and STEM fields. Application priorities range from 1 to 20, with lower values indicating higher priority; the mean priority is approximately 2–3. See Table \ref{tab:summary_fields}.

\begin{table}[htbp]\centering
\caption{Summary statistics by field of study}
\label{tab:summary_fields}
\begin{threeparttable}
\renewcommand{\arraystretch}{1.2}
\begin{tabular}{lccccccc}
\toprule
            & Business & Social science & Teaching & Medicine & Health & STEM & Other \\
\midrule
Female      &       0.411&       0.636&       0.758&       0.602&       0.803&       0.272&       0.685\\
Charitable giving&       0.082&       0.091&       0.164&       0.127&       0.116&       0.065&       0.094\\
STEM degree &       0.096&       0.056&       0.106&       0.058&       0.029&       0.577&       0.123\\
App. priority&       3.141&       3.458&       2.292&       2.827&       3.482&       2.786&       2.714\\
Age         &      23.418&      25.293&      34.858&      26.319&      27.508&      22.717&      25.447\\
Observations&       19882&       47011&       11102&       21169&       26867&       20358&        3190\\

\\[-0.8em]
\bottomrule
\end{tabular}
\begin{tablenotes}[flushleft]
\footnotesize
\item \textit{Notes:} The table reports mean values for each variable by field of study. Charitable giving is an indicator equal to one if the individual donated to charity in 2021 (last year of data). STEM degree is an indicator equal to one if the individual holds a completed university degree in a STEM field in 2022 (last year of data). Observations correspond to the number of individuals in each field. Age, female and application priority are observed at the year of application.
\end{tablenotes}

\end{threeparttable}
\end{table}

The higher mean age of teaching applicants ($\approx$34 vs. 25 in other fields) is largely driven by programs targeting mid-career professionals, such as \emph{Kompletterande pedagogisk utbildning} (KPU) and \emph{Specialpedagog}. These programs account for about 4\% of the sample and have a mean age of 40; excluding them lowers the mean age of teaching applicants to $\approx$29, closer to the overall average. Demand may also reflect the 2011 \emph{lärarlegitimation} reform, which introduced certification requirements for permanent teaching positions.

\subsection{Balance and Robustness of Application 1}
\label{sec:balanceandrobustness}
The balance test in Table~\ref{tab:balance} verifies that the instrument is orthogonal to individual-level predetermined characteristics. We regress each control variable on $L_i$ without fixed effects, which is the appropriate specification given that the instrument is already normalized within groups. All four coefficients are small and statistically indistinguishable from zero, and the joint F-statistic of 0.63 ($p = 0.64$) confirms that the lottery is as good as randomly assigned.

\begin{table}[htbp]\centering
\caption{Balance test}
\label{tab:balance}
\begin{threeparttable}
\begin{tabular}{p{6cm}c}
\hline
& Coefficient on $L_i$ \\
\hline
\\
Prior giving        &    0.000115         \\
                    &   (0.00117)         \\
\\
Female              &    -0.00317         \\
                    &   (0.00425)         \\
\\
Age                 &     -0.0480         \\
                    &    (0.0484)         \\
\\
Application priority&      0.0252         \\
                    &    (0.0257)         \\
\\
Joint F-statistic   &       0.630         \\
Joint p-value       &       0.641         \\
Observations        &      149579         \\
\\
\hline
\end{tabular}
\begin{tablenotes}[flushleft]
\footnotesize
\item \textit{Notes:} Each row reports the coefficient from a separate OLS regression of the row variable on the lottery instrument $L_i$, clustered by pivotal group. The joint F-statistic tests the null that all coefficients are simultaneously zero.
\end{tablenotes}
\end{threeparttable}
\end{table}

Table~\ref{tab:robustness} examines the sensitivity of the main estimates (first column of Table~\ref{tab:main}) to the inclusion of controls and fixed effects. Column~(1) includes field-of-application dummies only. Columns~(2) and~(3) add prior giving and then the full set of demographic controls. Column~(4) adds year fixed effects (this column coincides with our preferred model in the main results table \ref{tab:main}) and column~(5) absorbs pivotal group fixed effects via within-group demeaning. The field-level coefficients are stable throughout: point estimates shift by less than a percentage point across specifications and significance patterns are unchanged. The Kleibergen-Paap F-statistic remains close to 54-55 in all columns, confirming that the instrument strength is unaffected by the inclusion of controls---as expected given the orthogonality established in Table~\ref{tab:balance}. Among the control variables, prior giving is strongly persistent, women give more than men, giving increases with age, and applicants who ranked the pivotal program lower in their preference list give slightly less---a pattern consistent with weaker attachment to the field rather than a threat to identification.

\begin{table}[htbp]\centering
\caption{Robustness to control function}
\label{tab:robustness}
\begin{threeparttable}
\begin{tabular}{lccccc}
\hline
                    &\multicolumn{1}{c}{(1)}         &\multicolumn{1}{c}{(2)}         &\multicolumn{1}{c}{(3)}         &\multicolumn{1}{c}{(4)}         &\multicolumn{1}{c}{(5)}         \\
\hline
Business            &      0.0227         &      0.0197         &      0.0198         &      0.0196         &      0.0200         \\
                    &    (0.0152)         &    (0.0148)         &    (0.0149)         &    (0.0149)         &    (0.0149)         \\
[1em]
Social science      &      0.0344\sym{**} &      0.0377\sym{**} &      0.0367\sym{**} &      0.0368\sym{**} &      0.0371\sym{**} \\
                    &    (0.0163)         &    (0.0157)         &    (0.0156)         &    (0.0156)         &    (0.0156)         \\
[1em]
Teaching            &      0.0559\sym{**} &      0.0559\sym{**} &      0.0578\sym{***}&      0.0579\sym{***}&      0.0566\sym{**} \\
                    &    (0.0257)         &    (0.0222)         &    (0.0222)         &    (0.0221)         &    (0.0221)         \\
[1em]
Medicine            &      0.0692\sym{**} &      0.0684\sym{**} &      0.0765\sym{***}&      0.0767\sym{***}&      0.0741\sym{***}\\
                    &    (0.0271)         &    (0.0266)         &    (0.0266)         &    (0.0266)         &    (0.0266)         \\
[1em]
Health              &      0.0520\sym{**} &      0.0498\sym{**} &      0.0531\sym{***}&      0.0531\sym{***}&      0.0525\sym{***}\\
                    &    (0.0205)         &    (0.0201)         &    (0.0199)         &    (0.0199)         &    (0.0200)         \\
[1em]
STEM                &      0.0528\sym{***}&      0.0452\sym{***}&      0.0455\sym{***}&      0.0452\sym{***}&      0.0457\sym{***}\\
                    &    (0.0170)         &    (0.0166)         &    (0.0168)         &    (0.0168)         &    (0.0168)         \\
[1em]
Other               &     -0.0257         &     -0.0305         &     -0.0251         &     -0.0252         &     -0.0264         \\
                    &    (0.0232)         &    (0.0204)         &    (0.0202)         &    (0.0201)         &    (0.0202)         \\
[1em]
Prior giving        &                     &       0.482\sym{***}&       0.447\sym{***}&       0.461\sym{***}&       0.450\sym{***}\\
                    &                     &   (0.00926)         &   (0.00908)         &   (0.00898)         &   (0.00910)         \\
[1em]
Female              &                     &                     &      0.0320\sym{***}&      0.0312\sym{***}&      0.0317\sym{***}\\
                    &                     &                     &   (0.00229)         &   (0.00227)         &   (0.00177)         \\
[1em]
Age                 &                     &                     &     0.00279\sym{***}&     0.00295\sym{***}&     0.00168\sym{***}\\
                    &                     &                     &  (0.000151)         &  (0.000148)         &  (0.000143)         \\
[1em]
Application priority&                     &                     &    -0.00271\sym{***}&    -0.00258\sym{***}&    -0.00185\sym{***}\\
                    &                     &                     &  (0.000249)         &  (0.000239)         &  (0.000308)         \\
\hline
Pivotal group FE    &          No         &          No         &          No         &          No         &         Yes         \\
Year FE             &          No         &          No         &          No         &         Yes         &          No         \\
KP F-statistic      &      55.078         &      55.144         &      53.817         &      53.654         &      55.152         \\
Observations        &      149579         &      149579         &      149579         &      149579         &      149579         \\
\\
\hline
\end{tabular}
\begin{tablenotes}[flushleft]
\footnotesize
\item \textit{Notes:} Outcome variable is a binary indicator of charitable giving after the year of admittance. All specifications include field-of-application dummies. Standard errors clustered by pivotal group. Year fixed effects are subsumed by pivotal group fixed effects in column~(5).
\end{tablenotes}
\end{threeparttable}
\end{table}

\pagebreak
\section{Theory appendix}
\subsection{The cascade identity under general allocation mechanisms}
\label{app:general}

The cascade identity derived in Section~\ref{sec:model} is stated in the
language of ranked queues, but its logic does not depend on the allocation
mechanism. This appendix proves the result for an arbitrary mechanism,
clarifying what is essential and what is institutional detail.

\paragraph{Setup.}
There are $K$ goods with fixed total supply $\bar{Q}_1, \dots, \bar{Q}_K$
and an outside option with unrestricted supply. Individuals are allocated
quantities $Q_{ik}$ through some mechanism parameterized by variables
$Z = (Z_1, \dots, Z_K)$. In a queue system, $Z_k$ is the lottery rank or
cutoff for program~$k$. In a market, $Z_k$ is the price of good~$k$. In a
matching algorithm, $Z_k$ may be a priority score.
The mechanism is otherwise unrestricted.

The first-stage matrix has entries $\pi_{jk} = \partial\, E[Q_{ij}] /
\partial\, Z_k$, the reduced form is $\mathrm{RF}_k = \partial\, E[Y_i] /
\partial\, Z_k$, and the 2SLS coefficients satisfy $\bfRF = \Pi^T \bfbeta$
by construction.

\paragraph{The key restriction.}
We require two conditions:

\begin{description}
\item[Assumption 1$'$ (Relevance).] $\pi_{kk} \neq 0$ for all $k$.

\item[Assumption 2$'$ (Supply--instrument alignment).] When the supply of
good~$k$ increases by one unit, holding all other supplies fixed, the
mechanism restores equilibrium by adjusting $Z_k$. That is, $Z_k$ is the
variable through which a marginal supply expansion of good~$k$ is
transmitted to individual allocations.
\end{description}

Assumption~2$'$ is the general form of ranking-list refill. In a queue
system, adding a slot to program~$k$ shifts the admission cutoff---a change
in $Z_k$---and the ranking list determines who is affected. In a market,
adding a unit of good~$k$ lowers its price---again a change in $Z_k$---and
the demand system determines who is affected. In both cases, the first-stage
matrix $\Pi$ captures how individual allocations respond to the mechanism
variable, and the same matrix governs the response to a supply expansion.
The specific institution determines the \emph{magnitudes} of $\pi_{jk}$
and $\mathrm{RF}_k$, but the algebraic structure is the same.

\paragraph{Derivation.}
Let $T_k = dY / d\bar{Q}_k$ denote the total societal effect of increasing
the supply of good~$k$ by one unit, holding all other supplies fixed and
letting the mechanism adjust to restore equilibrium.

By Assumption~2$'$, the supply expansion operates through $Z_k$. The
direct effect per unit of new supply is $\mathrm{RF}_k / \pi_{kk}$: the
reduced-form outcome change per unit of $Z_k$, rescaled by how many units
of good~$k$ a one-unit change in $Z_k$ delivers.

But the change in $Z_k$ also shifts allocations of other goods: per unit
increase in $Z_k$, the total allocation to good~$j$ changes by $\pi_{jk}$.
Since the supply of good~$j$ is fixed, the mechanism must adjust $Z_j$ to
undo this change, restoring $\sum_i Q_{ij} = \bar{Q}_j$. This offsetting
adjustment in $Z_j$ is itself a marginal supply-neutral reallocation of
good~$j$ across individuals. Per unit of good~$k$ expanded, the magnitude
of the reallocation of good~$j$ is $|\pi_{jk}| / \pi_{kk}$.

When $\pi_{jk} < 0$ (the typical substitution case), the initial shock
reduces total demand for good~$j$, so the mechanism must push $Z_j$ in the
direction that increases good-$j$ allocations---exactly as if good~$j$'s
supply had expanded. The societal effect of this
reallocation is $T_j$ per unit. When $\pi_{jk} > 0$ (complementarity), the
mechanism must contract good-$j$ allocations, with effect $-T_j$ per unit.
In both cases, the contribution to $T_k$ is $(-\pi_{jk}/\pi_{kk}) \cdot T_j$.

Summing:
\begin{equation}
T_k = \frac{\mathrm{RF}_k}{\pi_{kk}} + \sum_{j \neq k}
\frac{-\pi_{jk}}{\pi_{kk}} \, T_j, \qquad k = 1, \dots, K.
\end{equation}
This is identical to equation~\eqref{eq:cascade}. The remainder of the proof
follows Proposition~\ref{prop:main}.

\begin{proposition}[General Cascade Identity]
Under Assumptions~1$'$ and~2$'$, for any allocation mechanism with $K$
fixed-supply goods,
\begin{equation}
\bfT = \bfbeta.
\end{equation}
The 2SLS coefficient $\beta_k$ equals the total societal effect of
expanding the supply of good~$k$ by one unit, including all equilibrium
reallocations through the mechanism.
\end{proposition}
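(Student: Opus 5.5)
The proof reduces the general statement to the same linear-algebra step as Proposition~\ref{prop:main}. The derivation preceding the statement has already produced the cascade equation
\[
T_k = \frac{\mathrm{RF}_k}{\pi_{kk}} + \sum_{j\neq k}\frac{-\pi_{jk}}{\pi_{kk}}\,T_j ,\qquad k=1,\dots,K,
\]
for an arbitrary mechanism satisfying Assumptions~1$'$ and~2$'$. What remains is to make that derivation rigorous and then carry out the algebra.

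\textbf{Step 1: justify the cascade equation as an implicit-function computation.} Equilibrium is characterized by the $K$ market-clearing conditions $E[Q_{ij}](Z)=\bar Q_j/N$ (with $N$ the population), so that $\Pi\, dZ = d\bar Q/N$. Assumption~2$'$ says that the mechanism variables are exactly $Z$, so a supply change $d\bar Q$ is implemented by $dZ=\Pi^{-1}d\bar Q$. The outcome then responds by $dY = \mathbf{RF}^T dZ$. Hence $\mathbf{T}^T = \mathbf{RF}^T\Pi^{-1}$, that is, $\Pi^T\mathbf{T}=\mathbf{RF}$.

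This route is cleaner than the round-by-round story. It also shows that the recursion's use of $T_j$ for the induced adjustment in $Z_j$, including the sign flip in the complementarity case, is exactly the linearity of this map. I would note that, by Step~1 itself, rows $k$ of $\Pi^T\mathbf{T}=\mathbf{RF}$ rearrange, after division by $\pi_{kk}\neq 0$ (Assumption~1$'$), into the displayed recursion. The two formulations are therefore equivalent.

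\textbf{Step 2: invoke the reduced-form identity.} By construction of 2SLS, $\mathbf{RF}=\Pi^T\bfbeta$. Subtracting this from Step~1 gives $\Pi^T(\mathbf{T}-\bfbeta)=0$.

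\textbf{Step 3: conclude.} From Step~2, $\mathbf{T}=\bfbeta$ once $\Pi$ is invertible. The main obstacle is here. Assumption~1$'$ gives only nonzero diagonal entries, which does not by itself imply that $\Pi$ is invertible. The proof of Proposition~\ref{prop:main} glosses over this by calling $\Pi$ ``diagonal nonzero.''

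I would close the gap by noting that the 2SLS estimand $\bfbeta=(\Pi^T)^{-1}\mathbf{RF}$ exists only if $\Pi$ is nonsingular, so invertibility is implicit in the statement that $\bfbeta$ is defined. Equivalently, Step~1 needs $\Pi$ invertible for the supply expansion to be implementable at all. Alternatively, $I-M=D^{-1}\Pi^T$ with $D=\mathrm{diag}(\pi_{kk})$, so the spectral-radius condition $\rho(M)<1$ of Section~\ref{sec:policy_parameter} gives invertibility directly.

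I would state whichever of these is adopted explicitly as the maintained regularity condition alongside Assumption~1$'$. Given that condition, uniqueness yields $\mathbf{T}=\bfbeta$.
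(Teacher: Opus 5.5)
Your proof is correct, and its final step is the paper's: you combine $\Pi^T\bfT=\bfRF$ with the reduced-form identity $\bfRF=\Pi^T\bfbeta$. Where you differ is in how you get $\Pi^T\bfT=\bfRF$.

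The paper argues round by round:
\begin{itemize}
\item the expansion moves $Z_k$;
\item this shifts allocations of each good $j$ by $\pi_{jk}$ per unit of $Z_k$;
\item the mechanism must move $Z_j$ to restore $\bar Q_j$;
\item that offsetting adjustment is valued at $T_j$ per unit of good $j$, with separate sign arguments for substitutes and complements.
\end{itemize}
Summing gives the recursion. The paper then multiplies by $\pi_{kk}$ and stacks.

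You instead differentiate the market-clearing system once, which yields $\bfT^T=\bfRF^T\Pi^{-1}$ directly. What your route buys:
\begin{itemize}
\item it makes explicit the linear superposition that the paper uses tacitly when it values each induced adjustment at $T_j$;
\item it treats both signs of $\pi_{jk}$ uniformly;
\item it needs only nonsingularity of $\Pi$, not convergence of the t\^atonnement.
\end{itemize}
The paper's route buys the economic reading of $-\pi_{jk}/\pi_{kk}$ as vacancy-creation rates. It also gives the split of $T_k$ into the Wald ratio $W_k$ plus a cascade term, which motivates the empirical comparison. You recover both anyway by dividing row $k$ by $\pi_{kk}$.

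Your Step 3 is a genuine correction, not pedantry. Nonzero diagonal entries do not make $\Pi$ nonsingular. The paper itself supplies a counterexample: it notes that zero column sums make $\Pi$ singular, and that can happen with every $\pi_{kk}\neq 0$. So the paper's ``diagonal nonzero'' step needs exactly the rank condition you add. Taking it as the standard 2SLS rank condition is appropriate, as is the sufficient condition $\rho(M)<1$ via $I-M=D^{-1}\Pi^T$.

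One bookkeeping fix. With per-capita $\Pi$ and $\bfRF$, you should write $dZ=\Pi^{-1}d\bar Q/N$, and $dY=N\,\bfRF^T dZ$ for total $Y$. The factors of $N$ cancel, so your conclusion $\bfT^T=\bfRF^T\Pi^{-1}$ stands.
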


The result holds because the 2SLS estimand inverts the full first-stage
matrix, and the first-stage matrix is exactly the object that governs
how the mechanism reallocates goods when supply changes. In a queue
system, $\Pi$ encodes how lottery outcomes redistribute students across
programs. In a market, $\Pi$ encodes how price changes redistribute
goods across consumers. In either case, inverting $\Pi$ traces out the
full chain of adjustments that a supply expansion triggers: the cascade.
The institution determines the content of $\Pi$; the algebra that
converts $\Pi$ and the reduced form into a policy effect is the same
regardless.

The identity fails when the supply of some good~$j$ inside the system
responds endogenously to the expansion of good~$k$. If $\bar{Q}_j$ is
not fixed, then the mechanism need not fully offset the cross-effect
$\pi_{jk}$ through reallocation: part of the adjustment is absorbed by
a change in total quantity rather than a redistribution among
individuals. The cascade equation then overstates the reallocation
component and understates the quantity-adjustment component, and
$\bfT \neq \bfbeta$. This is why the fixed-supply condition is the essential
assumption, not the details of the allocation mechanism.

\pagebreak

\subsection{The Neumann series and the infinite cascade: a two-program illustration}
\label{app:neumann}

To make the connection between the matrix inversion in Proposition~\ref{prop:main} and the dynamic cascade concrete, we spell out the Neumann series for the simplest non-trivial case: two programs with mutual substitution.

\paragraph{Setup.}
With $K = 2$, the cascade equation~\eqref{eq:cascade} becomes
\begin{equation}
T_1 = \frac{\mathrm{RF}_1}{\pi_{11}} + \frac{-\pi_{21}}{\pi_{11}} \, T_2, \qquad
T_2 = \frac{\mathrm{RF}_2}{\pi_{22}} + \frac{-\pi_{12}}{\pi_{22}} \, T_1.
\label{eq:cascade_2x2}
\end{equation}
Define $W_k = \mathrm{RF}_k / \pi_{kk}$, the instrument-specific Wald ratio for program~$k$, and
\begin{equation}
r_{21} = \frac{-\pi_{21}}{\pi_{11}}, \qquad r_{12} = \frac{-\pi_{12}}{\pi_{22}},
\end{equation}
the \emph{vacancy creation rates}: $r_{21}$ is the number of vacancies created in program~2 per new admission to program~1, and $r_{12}$ vice versa. Under substitution, $\pi_{21} < 0$ and $\pi_{12} < 0$, so both rates are positive. The system simplifies to
\begin{equation}
T_1 = W_1 + r_{21} \, T_2, \qquad T_2 = W_2 + r_{12} \, T_1.
\label{eq:cascade_simple}
\end{equation}

\paragraph{Iterated substitution.}
Rather than solving the system directly, substitute repeatedly to trace the cascade round by round. Starting from $T_1$:
\begin{align}
T_1 &= W_1 + r_{21}\bigl(W_2 + r_{12} \, T_1\bigr)
    = W_1 + r_{21} W_2 + r_{21} r_{12} \, T_1.
\end{align}
Instead of solving, substitute once more:
\begin{align}
T_1 &= W_1 + r_{21} W_2 + r_{21} r_{12}\bigl(W_1 + r_{21} \, T_2\bigr) \notag \\
    &= W_1 + r_{21} W_2 + r_{21} r_{12} \, W_1 + (r_{21} r_{12}) \, r_{21} \, T_2.
\end{align}
Continuing this process, a pattern emerges. Each pair of substitutions generates a factor of $\rho = r_{21} r_{12}$---one round trip through both programs---and the cascade alternates between refilling program~1 and program~2:
\begin{align}
T_1 &= \underbrace{W_1 + \rho \, W_1 + \rho^2 W_1 + \cdots}_{\text{refills in program 1}}
     \;+\; \underbrace{r_{21} W_2 + \rho \, r_{21} W_2 + \rho^2 r_{21} W_2 + \cdots}_{\text{refills in program 2}} \notag \\[6pt]
    &= W_1 \sum_{n=0}^{\infty} \rho^n + r_{21} W_2 \sum_{n=0}^{\infty} \rho^n
     = \frac{W_1 + r_{21} W_2}{1 - \rho}.
\label{eq:T1_series}
\end{align}
The cascade for program~2 follows by symmetry:
\begin{equation}
T_2 = \frac{W_2 + r_{12} W_1}{1 - \rho}.
\label{eq:T2_series}
\end{equation}
Both series converge when $\rho = r_{21} r_{12} < 1$: each round trip through the system displaces strictly fewer people than the previous one, so the cascade attenuates and terminates in the limit.

The cascade reads round by round as follows. Consider the expansion of program~1:
\begin{enumerate}[nosep]
\item[\emph{Round 0.}] Program~1 admits one new student. Direct effect: $W_1$. This creates $r_{21}$ vacancies in program~2.
\item[\emph{Round 1.}] Program~2 refills. Effect: $r_{21} W_2$. This creates $r_{21} r_{12} = \rho$ vacancies back in program~1.
\item[\emph{Round 2.}] Program~1 refills. Effect: $\rho \, W_1$. This creates $\rho \, r_{21}$ vacancies in program~2.
\item[\emph{Round 3.}] Program~2 refills. Effect: $\rho \, r_{21} W_2$. This creates $\rho^2$ vacancies in program~1.
\end{enumerate}
Each round trip attenuates by the factor $\rho$, and the total effect sums the contributions from all rounds.

\paragraph{Connection to the matrix formulation.}
Define the vacancy creation matrix
\begin{equation}
M = \begin{pmatrix} 0 & r_{21} \\ r_{12} & 0 \end{pmatrix},
\end{equation}
which collects the vacancy creation rates with zeros on the diagonal. The cascade equation~\eqref{eq:cascade_simple} can be rewritten as $(I - M)\,\mathbf{T} = \mathbf{W}$, so that
\begin{equation}
\mathbf{T} = (I - M)^{-1}\, \mathbf{W}.
\end{equation}
The Neumann series expands the inverse as
\begin{equation}
(I - M)^{-1} = \sum_{n=0}^{\infty} M^n = I + M + M^2 + M^3 + \cdots
\label{eq:neumann}
\end{equation}
provided the spectral radius of $M$ is less than one, which in the $2 \times 2$ case reduces to $\rho = r_{21} r_{12} < 1$. Computing the first few powers:
\begin{align}
M^0 &= \begin{pmatrix} 1 & 0 \\ 0 & 1 \end{pmatrix}, &
M^1 &= \begin{pmatrix} 0 & r_{21} \\ r_{12} & 0 \end{pmatrix}, \\[6pt]
M^2 &= \begin{pmatrix} r_{21} r_{12} & 0 \\ 0 & r_{12} r_{21} \end{pmatrix}, &
M^3 &= \begin{pmatrix} 0 & (r_{21} r_{12})\,r_{21} \\ (r_{12} r_{21})\,r_{12} & 0 \end{pmatrix}.
\end{align}
Even powers of $M$ are diagonal: the cascade has completed a round trip and returned to the program where it started. Odd powers are off-diagonal: the cascade is currently on the other side. All entries are non-negative (under substitution), so every term in the Neumann series contributes positively---there are no alternating signs. Summing the geometric matrix series:
\begin{equation}
\sum_{n=0}^{\infty} M^n = \frac{1}{1 - \rho} \begin{pmatrix} 1 & r_{21} \\ r_{12} & 1 \end{pmatrix}.
\end{equation}
Multiplying by $\mathbf{W} = (W_1, W_2)^T$ recovers exactly the scalar expressions~\eqref{eq:T1_series}--\eqref{eq:T2_series}.

\subsection{Monte Carlo verification in a dynamic admissions environment}
\label{app:simulation}
 
The cascade identity is an algebraic result that holds in population under Assumptions~\ref{ass:relevance} and~\ref{ass:margin}. This appendix verifies computationally that the identity survives in a dynamic admissions environment incorporating multiple rounds, merit-score retakes, programme switching, new cohorts, non-compliance, heterogeneous treatment effects, and monotonicity violations.\footnote{The Python code reproducing all simulation results is available upon request.}
 
\paragraph{Design.}
We simulate a system with 100 identical A-type and 100 identical B-type programmes, each admitting 15 students per round over 5 admission rounds. The initial applicant pool comprises 10{,}000 individuals; 3{,}000 new entrants join each subsequent round, approximately replacing the number admitted. Each individual is characterised by a coarse integer merit score (1--5, mimicking the granularity of the Swedish \emph{Högskoleprovet}), a stochastic programme preference, and heterogeneous potential outcomes: $Y_i(A) = Y_i(0) + 2 + 0.5\,a_i + \eta_i^A$ and $Y_i(B) = Y_i(0) + 3 + 0.3\,a_i + \eta_i^B$, where $a_i$ is latent ability and $\eta_i^A, \eta_i^B$ are idiosyncratic shocks.
 
Within each programme, applicants are ranked by merit with a lottery tiebreaker. Slots are offered sequentially down the ranked list. Non-defier applicants reject offers with probability 0.3 (random non-compliance), and rejected slots pass to the next applicant in the queue. Between rounds, remaining applicants may improve their merit score by one step (probability 0.3) or switch preferred programme type (probability 0.03).
 
Five percent of individuals are ``Groucho Marx defiers'': if admitted through the lottery (pivotal group, merit equal to cutoff), they reject the offer and \emph{permanently exit} the pool---they are never admitted. If they \emph{lose} the lottery, they deterministically reapply to the same programme with a guaranteed merit improvement, making eventual above-cutoff admission nearly certain. This produces a severe monotonicity violation: a lottery win permanently destroys the defier's admission prospects, while a lottery loss leads to almost certain admission. Among defier applicants, the individual-level correlation between the lottery and ever-admitted is $-0.033$, compared to $+0.086$ for non-defiers.
 
\paragraph{Estimation and test.}
The data are stacked as applicant $\times$ round-of-application, with each row contributing that round's lottery draw as the instrument, mirroring the structure of the actual Swedish admissions data. The 2SLS regression is $Y_i = \beta_0 + \beta_A\,\texttt{ever\_A}_i + \beta_B\,\texttt{ever\_B}_i + \delta_g + \varepsilon_i$, where $\delta_g$ is a pivotal group fixed effect (simplified to a programme-of-application indicator in the simulation).
 
The true societal effect $T_A$ is computed by expanding A-programme~0 by one slot in round~1, re-running the full multi-round simulation with identical pre-drawn random numbers, and computing $T_A = \sum_i Y_i^{\text{expanded}} - \sum_i Y_i^{\text{baseline}}$. A critical design choice ensures that all random draws (lottery, programme choice, retake, acceptance) are pre-drawn per individual per round before any simulation runs, indexed by person~ID. This prevents a contamination problem: if random numbers were drawn on-the-fly from the current pool, removing one person through the extra slot would shift subsequent draws for everyone else, generating thousands of spurious outcome changes from a single-slot expansion.
 
For each of 1{,}000 Monte Carlo replications, we draw fresh random numbers, estimate $\hat{\beta}_k$ by 2SLS, and compute $T_k$ from the paired expansion experiment using the same draws. The identity holds in population if $E[\hat{\beta}_k] = E[T_k]$, tested via the paired $t$-statistic $t_k = \bar{d}_k / (\text{SD}(d_k)/\sqrt{S})$ where $d_k^{(s)} = \hat{\beta}_k^{(s)} - T_k^{(s)}$.
 
\paragraph{Results.}
Table~\ref{tab:mc_results} reports the main results. Both paired $t$-statistics are well within the acceptance region, confirming $\bfT = \bfbeta$.
 
\begin{table}[h]
\centering
\caption{Monte Carlo verification of the cascade identity}
\label{tab:mc_results}
\begin{tabular}{lcccccc}
\toprule
& $\bar{\hat{\beta}}_k$ & $\bar{T}_k$ & $\bar{W}_k$ & Cascade & Diff & Paired $t$ \\
\midrule
A-type & 1.830 & 1.788 & 0.718 & 1.112 & 0.042 & 1.35 \\
B-type & 2.874 & 2.855 & 2.163 & 0.711 & 0.019 & 0.60 \\
\bottomrule
\end{tabular}
 
\medskip
{\small\textit{Notes:} 1{,}000 MC replications. 5 rounds, 30\% non-compliance, 5\% defiers, heterogeneous effects, retakes, switching, and new cohorts. $\bar{W}_k$ = mean Wald ratio. Cascade $= \bar{\hat{\beta}}_k - \bar{W}_k$. Diff $= \bar{\hat{\beta}}_k - \bar{T}_k$. Mean first-stage: $\bar{\pi}_{AA} = 0.142$, $\bar{\pi}_{BB} = 0.142$. ATE(A) $= 2.0$, ATE(B) $= 3.0$.}
\end{table}
 
The cascade is quantitatively important: for A-type programmes, 61\% of the total policy effect ($1.112/1.830$) is due to downstream reallocations rather than the direct effect on the marginal entrant.
 
Table~\ref{tab:mc_robust} reports paired $t$-statistics from stripped-down specifications that isolate each complication. The identity holds throughout: across 1 to 10 rounds, with and without non-compliance, and with defiers.
 
\begin{table}[h]
\centering
\caption{Paired $t$-statistics across specifications}
\label{tab:mc_robust}
\begin{tabular}{lcc}
\toprule
Specification & $t_A$ & $t_B$ \\
\midrule
Static (1 round), full compliance, no defiers & $-$0.84 & 0.58 \\
3 rounds, full compliance, no defiers & 0.25 & $-$0.61 \\
5 rounds, full compliance, no defiers & $-$0.44 & $-$0.15 \\
10 rounds, full compliance, no defiers & 1.27 & $-$0.07 \\
5 rounds, 30\% non-compliance, no defiers & $-$0.30 & 0.94 \\
5 rounds, 30\% non-compliance, 5\% defiers & 1.35 & 0.60 \\
\bottomrule
\end{tabular}
 
\medskip
{\small\textit{Notes:} Stripped-down specifications use 200 MC replications; full specification (last row) uses 1{,}000. All specifications include heterogeneous treatment effects. Multi-round specifications include retakes, switching, and new cohorts.}
\end{table}
 
The identity fails in one predictable case: when programmes are \emph{under-subscribed} (slots exceed willing acceptors after accounting for non-compliance). The extra slot then has nothing to fill, yielding $T_k = 0$, while $\hat{\beta}_k$ is unaffected because the first stage is zero at under-subscribed programmes. This confirms that the binding condition is Assumption~\ref{ass:margin}: programmes must be capacity-constrained.

\subsection{The cascade identity in potential outcomes---a three-program example}
\label{sec:threeprog}

To build further intuition, we work through the simplest non-trivial case in a potential outcomes framework: three mutually exclusive choices indexed $j \in \{0, 1, 2\}$, where $j=0$ is the outside option (no higher education), $j=1$ is a mid-tier program, and $j=2$ is a more competitive program. Each program has its own admission lottery. We ask: what does the 2SLS coefficient on program~$j$ actually measure, and does it equal the societal policy effect $T_j$?

Following the potential outcomes framework used in e.g.\ \citet{kirkeboen2016}, a student's outcome is
\begin{equation}
Y = Y_0 + (Y_1 - Y_0)\,\mathbf{1}(j=1) + (Y_2 - Y_0)\,\mathbf{1}(j=2),
\end{equation}
where $Y_0, Y_1, Y_2$ are the potential outcomes under each choice. The 2SLS second stage is
\begin{equation}
Y = \beta_0 + \beta_1 A_1 + \beta_2 A_2 + \varepsilon,
\end{equation}
with instruments $Z_1$ and $Z_2$ for each program's lottery.

To make the cascade tractable by hand in this three-program example, we
impose
\begin{equation}
\pi_{21} = 0.
\label{eq:ranked_simple}
\end{equation}
This restriction says that the program~1 lottery does not affect
program~2 enrollment. It is akin to the irrelevance condition in \citet{kirkeboen2016}. It is plausible in a one-shot application game
where programs can be ordered by selectivity: a student on the margin
for the less competitive program~1 is typically far from the admission
threshold for program~2. In our data, which span 2008--2021, the
restriction need not hold---a student who loses the program~1 lottery
could retake the Swedish Scholastic Aptitude Test and eventually gain
admission to program~2. We adopt it here purely because it collapses
the cascade to a single step and allows the full argument to be traced easily
by hand. The general result (Proposition~\ref{prop:main}) does not
require it: the cascade identity $\beta_k = T_k$ holds for any
first-stage matrix, including cases where $\pi_{21} \neq 0$. Under the restriction, the only transitions induced by the two lotteries are:

\begin{center}
\begin{tabular}{ll}
$0 \to 1$ & a program~1 lottery win admits a student from outside higher education \\
$0 \to 2$ & a program~2 lottery win admits a student from outside higher education \\
$1 \to 2$ & a program~2 lottery win admits a student who would otherwise have attended program~1
\end{tabular}
\end{center}

\paragraph{Solving the moment conditions.}
The two reduced-form equations are
\begin{align}
\mathrm{RF}_1 &= \beta_1 \pi_{11} + \beta_2 \pi_{21} = \beta_1 \pi_{11}, \label{eq:rf1} \\
\mathrm{RF}_2 &= \beta_1 \pi_{12} + \beta_2 \pi_{22}, \label{eq:rf2}
\end{align}
where the simplification in~\eqref{eq:rf1} uses the restriction $\pi_{21} = 0$. Since the program~1 lottery moves students only between the outside option and program~1, the first equation yields a clean Wald ratio:
\begin{equation}
\beta_1 = \frac{\mathrm{RF}_1}{\pi_{11}} = E[Y_1 - Y_0 \mid 0 \to 1].
\end{equation}
This is a standard LATE: the average gain from attending program~1 relative to no higher education, for students at the program~1 admission margin.

For program~2, solving~\eqref{eq:rf2} gives
\begin{equation}
\beta_2 = \frac{\mathrm{RF}_2}{\pi_{22}} - \frac{\pi_{12}}{\pi_{22}} \, \beta_1.
\label{eq:beta2_decomp}
\end{equation}
The first term, $\mathrm{RF}_2 / \pi_{22}$, is the instrument-specific Wald ratio for program~2: the direct effect per new admit, ignoring cross-effects. The second term is the cascade correction. Because program~2 lottery wins draw some students away from program~1, we have $\pi_{12} < 0$, so $-\pi_{12}/\pi_{22} > 0$ is the rate at which new program~2 admits vacate seats in program~1. Each such vacancy is filled from program~1's ranking list, generating a societal gain of $\beta_1$. The cascade correction thus adds the downstream effect of refilling the seats left behind.

\paragraph{Expressing the coefficients in terms of complier types.}
We can now substitute the complier-type expressions for each component. The program~2 lottery induces two transitions: $0 \to 2$ with probability $p_{02}$ and $1 \to 2$ with probability $p_{12}$. These give:
\begin{align}
\mathrm{RF}_2 &= p_{02}\,E[Y_2 - Y_0 \mid 0 \to 2] + p_{12}\,E[Y_2 - Y_1 \mid 1 \to 2], \\
\pi_{22} &= p_{02} + p_{12}, \\
\pi_{12} &= -p_{12}.
\end{align}
The last expression reflects that each $1 \to 2$ transition reduces program~1 enrollment by one. Substituting into~\eqref{eq:beta2_decomp}:
\begin{equation}
\beta_2 = \frac{p_{02}\,E[Y_2 - Y_0 \mid 0 \to 2] \;+\; p_{12}\,\bigl(E[Y_2 - Y_1 \mid 1 \to 2] + E[Y_1 - Y_0 \mid 0 \to 1]\bigr)}{p_{02} + p_{12}}.
\label{eq:beta2_simple}
\end{equation}
The numerator makes the cascade visible at the level of individual treatment effects. When the new program~2 seat is filled directly from outside higher education (probability $p_{02}$), society gains $Y_2 - Y_0$ for that student. When it is filled by a student previously enrolled in program~1 (probability $p_{12}$), there are two gains: the student who moves up gains $Y_2 - Y_1$, and the person who fills the vacated program~1 seat from outside higher education gains $Y_1 - Y_0$.

Expression~\eqref{eq:beta2_simple} is the cascade identity. Expanding program~2 by one seat sets off the following chain:
\begin{enumerate}[nosep]
\item With probability $p_{02}/(p_{02}+p_{12})$, the new seat is filled directly from outside higher education. Society gains $Y_2 - Y_0$ for that student.
\item With probability $p_{12}/(p_{02}+p_{12})$, the new seat is filled by a student previously enrolled in program~1. Society gains $Y_2 - Y_1$ for that student. But the vacated program~1 seat is then filled from outside higher education, generating a further gain of $Y_1 - Y_0$.
\end{enumerate}
The cascade terminates after one step because restriction~\eqref{eq:ranked_simple} rules out any further displacement: the program~1 lottery does not affect program~2 enrollment, so the refilled program~1 seat cannot trigger further substitution. The 2SLS coefficient $\beta_2$ is exactly the per-seat societal effect of this expansion, confirming Proposition~\ref{prop:main} in this setting.

Without restriction~\eqref{eq:ranked_simple}, a program~1 lottery win could draw students away from program~2, generating a further round of substitution when program~2 refills its vacancy. The cascade would then require summing an infinite series---which is what the matrix inversion in Proposition~\ref{prop:main} accomplishes in general. The ranked restriction collapses this to a one-step cascade and allows the full argument to be traced by hand.

\end{document}